\newcommand{\SBVGBS}{{\rm SBVGBS}}
\newcommand{\GBS}{{\rm GBS}}
\newcommand{\e}{{\rm e}}
\newcommand{\dd}{{\rm d}}
\newcommand{\EL}{{\rm E}}
\newcommand{\SE}{{\rm SE}}
\newcommand{\Es}{\mathbb{E}}
\newcommand{\Var}{\mathbb{V}}
\newcommand{\sign}{{\rm sign}}
\newcommand{\Reais}{\mathbb{R}}
\newcommand{\diag}{{\rm diag}}
\newcommand{\BS}{{\rm BS}}
\newcommand{\SN}{{\rm SN}}
\newcommand{\N}{{\rm N}}
\newcommand{\BVSBS}{{\rm SBVBS}}
\newcommand{\SMVBS}{{\rm SMVBS}}
\newtheorem{theorem}{Theorem}[section]
\title{Skewed Multivariate Birnbaum--Saunders Distributions}
\author{Artur J. Lemonte\\
{\small {\it Departamento de Estat\'istica, Universidade de S\~ao Paulo, S\~ao Paulo/SP, Brazil}}\\
Guillermo Mart\'inez--Florez\\
{\small {\it Departamento de Matem\'aticas, Universidad de C\'ordoba, Monter\'ia, Colombia}} \\
Germ\'an Moreno--Arenas\\
{\small {\it Escuela de Matem\'aticas, Universidad Industrial de Santander, Bucaramanga, Colombia}}
}
\date{}
\begin{document}
\maketitle

\begin{abstract}

The univariate Birnbaum--Saunders distribution has been used quite effectively to model times to
failure for materials subject to fatigue and for modeling lifetime data.
In this article, we define a skewed version of the Birnbaum--Saunders
distribution in the multivariate setting and derive
several of its properties. The proposed skewed multivariate model
is an absolutely continuous distribution whose marginals are univariate
Birnbaum--Saunders distributions. Estimation of the parameters
by maximum likelihood is discussed and the Fisher's information matrix is determined.
A skewed bivariate version for the generalized Birnbaum--Saunders distribution is also
introduced. We provide an application to real data which illustrates the usefulness of the
proposed multivariate model.\\

\noindent {\it Key words:} Birnbaum--Saunders distribution, generalized Birnbaum--Saunders distribution,
maximum likelihood estimators, modified moment estimators, multivariate distributions.
\end{abstract}

\section{Introduction}\label{introduction}

The univariate family of distributions proposed by \cite{BS1969}, also known as the fatigue life
distributions, has been widely applied for describing fatigue lifetimes.
This family was originally derived from a model for which failure follows from the
development and growth of a dominant crack. A random variable $T$ has
a Birnbaum--Saunders (BS) distribution if it can be written as
$T = \beta\{\alpha Z/2 + [(\alpha Z/2)^2 + 1]^{1/2}\}^2$,
where $Z$ is a random variable following the standard normal distribution,
i.e.~$Z\sim\N(0,1)$.
Its density function is
\begin{equation}  \label{fBS}
f_T(t)=\frac{t^{-3/2}(t+\beta)}{2\sqrt{2\pi}\alpha\sqrt{\beta}}
\exp\biggl[-\frac{1}{2\alpha^2}\biggl(\frac{t}{\beta}+\frac{\beta}{t}-2\biggr)\biggr],\qquad t>0,
\end{equation}
which depends on two parameters: the shape $\alpha>0$ and scale $\beta>0$,
which is also the median of the distribution.
We have $kT\sim\BS(\alpha, k\beta)$ for any $k>0$, i.e.~the BS distribution
is closed under scale transformations.
The expected value, variance, skewness and
kurtosis of $T$ are, respectively,
\[
\Es(T) = \beta\biggl(1 + \frac{1}{2}\alpha^2\biggr),
\qquad
\Var(T) = (\alpha\beta)^2\biggl(1 + \frac{5}{4}\alpha^2\biggr),
\]
\[
\gamma_{3} = \frac{16\alpha^2(11\alpha^2 + 6)}{(5\alpha^2 + 4)^3},
\qquad
\gamma_{4} = 3 + \frac{6\alpha^2(93\alpha^2 + 41)}{(5\alpha^2 + 4)^3}.
\]
The density function \eqref{fBS} is right skewed and the skewness decreases with $\alpha$.
Notice that both mean and variance increase as $\alpha$ increases.
It is also of interest to mention that if $T\sim\BS(\alpha, \beta)$,
then $T^{-1}\sim\BS(\alpha, \beta^{-1})$. It implies that the BS distribution
also belongs to the family of random variables closed under reciprocation
\citep{Saunders1974}. It then follows that
\[
\Es(T^{-1}) = \beta^{-1}\biggl(1 + \frac{1}{2}\alpha^2\biggr),
\qquad
\Var(T^{-1}) = \alpha^2\beta^{-2}\biggl(1 + \frac{5}{4}\alpha^2\biggr).
\]
The shape of the hazard function of the BS distribution
is discussed in \cite{kundu-et-al-2008}. The authors showed
that the hazard rate function is not monotone and
is unimodal for all ranges of the parameter values.
Some interesting results on improved statistical inference for the
BS distribution may be revised in \cite{WuWong2004} and \cite{LCNV07, LSCN08}.

The univariate BS distribution has received significant attention
over the last few years by many researchers and
some generalizations are proposed in \cite{Diaz-Leiva05},
\cite{Owen2006}, \cite{guiraud-et-al-2009}, \cite{Leiva-et-al-2009},
\cite{Castillo-et-al-2011} and \cite{CordeiroLemonte2011}, among other.
On the other hand, as far as we know, little work has been done to extend the
BS distribution to the multivariate case. We can refer to the works by
\cite{DiazGarciaDominguezMolina2006}, \cite{Kundu-et-al-2010} and
\cite{Caro-Lopera-et-al-2012}. In \cite{DiazGarciaDominguezMolina2006},
the authors defined an independent multivariate BS distribution.
By using the bivariate normal distribution function,
\cite{Kundu-et-al-2010} proposed a bivariate BS distribution
which is absolutely continuous and has five parameters.
Finally, \cite{Caro-Lopera-et-al-2012} introduced the
matrix-variate generalized BS distribution.

As can be observed, little work on multivariate versions
for the BS distribution have been published. In this paper,
in addition to the existing multivariate BS models,
we shall propose the asymmetric (skewed) multivariate BS distribution
based on the work of \cite{Arnold-et-al-2002}. The main motivation for introducing
this multivariate version of the BS distribution
relies on the fact that the practitioners will have a new multivariate BS
model to use in multivariate settings, since the formulae related with the new
multivariate model are manageable and with the use of modern computer
resources and its numerical capabilities, the proposed model may prove to be an
useful addition to the arsenal of applied statisticians.
Additionally, the new model is quite flexible (see Figure \ref{densities_plots}
in Section \ref{bivcase}) and can
be widely applied in analyzing multivariate data.
Further, we provide an application to real data in which
is showed that the new multivariate model yields a better
fit than other multivariate BS distributions available in the literature.

The paper unfolds as follows. The skewed bivariate BS distribution
is defined in Section \ref{bivcase} and then several properties are discussed.
The multivariate extension is presented in Section \ref{mult_case}.
In Section \ref{estimation_infer}, we propose different methods for
estimating the unknown parameters as well as derive the
information matrix and discuss likelihood ratio tests for some hypotheses of interest.
In particular, we propose modified moment estimators for the unknown parameters which are
explicit in form and can therefore be used effectively as the
initial guess in the iterative process for the computation of the
maximum likelihood estimators. Further, the asymptotic distribution of the
maximum likelihood estimators is derived and thus the asymptotic confidence
intervals for the unknown parameters can be constructed.
The usefulness of the proposed model is illustrated in
an application to real data in Section \ref{application}.
We also introduce in Section \ref{gBSbiv} the
skewed bivariate generalized BS distribution.
Finally, Section \ref{conclusions}
closes the paper with some concluding remarks.

\section{Skewed bivariate BS distribution}\label{bivcase}

We initially consider the skewed bivariate BS distribution.
For each $x\in\Reais$ and for each $y\in\Reais$, consider the conditional distributions
\begin{equation}\label{cond_dist}
X|Y=y \sim \SN(\lambda y),
\qquad
Y|X=x\sim\SN(\lambda x),
\end{equation}
where $X|Y=y \sim \SN(\lambda y)$ means that given $Y=y$, $X|Y=y$
has skew normal distribution \citep{Azzalini1985}. The shape parameter
$\lambda\in\Reais$ determines the skewness of the density.
From \cite{Arnold-et-al-2002} and using the conditional
distributions in \eqref{cond_dist}, the joint
probability density function (pdf) of the random vector $(X, Y)$ takes the form
\begin{equation}\label{Arnoldskew}
f_{X,Y}(x,y)=2\phi(x)\phi(y)\Phi(\lambda xy),\qquad (x,y)\in\Reais^2,
\end{equation}
where $\phi(\cdot)$ and $\Phi(\cdot)$ denote the pdf
and cumulative distribution function (cdf) of the standard normal distribution,
respectively. Also, $f_X(x)=\phi(x)$ and $f_Y(y)=\phi(y)$. If $\lambda=0$ in \eqref{Arnoldskew}, then
$f_{X,Y}(x,y)=\phi(x)\phi(y)$ and hence $X$ and $Y$ become independent.
For $\lambda\neq 0$, it can be shown that the correlation between $X$ and $Y$,
$\rho(X,Y)$ say, is given by 
\[
\rho(X,Y) = \sign(\lambda)\times \frac{U(3/2,2,1/(2\lambda^2))}{2\lambda^2\sqrt{\pi}},
\]
where $U(a,b,z)$ denotes the confluent hypergeometric function, defined as
\[
U(a,b,z)=\frac{1}{\Gamma(a)}\int_{0}^\infty\e^{-zt}t^{a-1}(1+t)^{b-a-1}\dd t,
\]
with $b>a>0$ and $z>0$, and $\Gamma(\cdot)$ represents the gamma function.
Therefore, the parameter $\lambda$ also governs the correlation.

Let $Z_j \sim\N(0,1)$, for $j=1,2$, with $Z_1|Z_2=z_2\sim\SN(\lambda z_2)$ and
$Z_2|Z_1=z_1\sim\SN(\lambda z_1)$.
Then, taking the transformation
\[
T_j=\beta_{j}\left[\frac{\alpha_{j}}{2}{Z_j}+\sqrt{\left(\frac{\alpha_{j}}{2}{Z_j}\right)^2+1}\right]^2,\qquad j=1,2,
\]
where $\alpha_{j}>0$ and $\beta_{j}>0$, the joint pdf  of the skewed bivariate
BS (SBVBS) distribution takes the form
\begin{equation}\label{fBGBS1}
f_{T_1,T_2}(t_1,t_2)=2\phi(a_{1})\phi(a_{2})\Phi(\lambda a_{1}a_{2})
\frac{t_1^{-3/2}(t_1+\beta_{1})}{2\alpha_{1}\sqrt{\beta_{1}}}
\frac{t_2^{-3/2}(t_2+\beta_{2})}{2\alpha_{2}\sqrt{\beta_{2}}}, \qquad(t_1,t_2)\in\Reais_+^2,
\end{equation}
where
\begin{equation}\label{at1}
a_{j}=a_{j}(\alpha_{j},\beta_{j})=\frac{1}{\alpha_{j}}\biggl[\biggl(\frac{t_{j}}{\beta_{j}}\biggr)^{1/2}
-\biggl(\frac{\beta_{j}}{t_{j}}\biggr)^{1/2}\biggr],\qquad j=1,2.
\end{equation}
The notation used is $(T_1, T_2)\sim\BVSBS(\alpha_{1}, \alpha_{2}, \beta_{1}, \beta_{2}, \lambda)$.
The random variables $T_1$ and $T_2$ become independent for $\lambda=0$ in \eqref{fBGBS1}
and hence the proposed bivariate
model reduces to the independent bivariate model considered by \cite{DiazGarciaDominguezMolina2006}.
So, as remarked, the shape parameter $\lambda$ also introduces
correlation between $T_1$ and $T_2$.

Contour plots for the joint pdf \eqref{fBGBS1} are presented
in Figure \ref{densities_plots}. From this figure, note that \eqref{fBGBS1} can take on
different shapes and will therefore be useful in analyzing
bivariate data. Additionally, notice that \eqref{fBGBS1} can be unimodal or
bimodal depending on the value of $\lambda$.
\begin{figure}[!htp]
\centering
\includegraphics[scale=0.45]{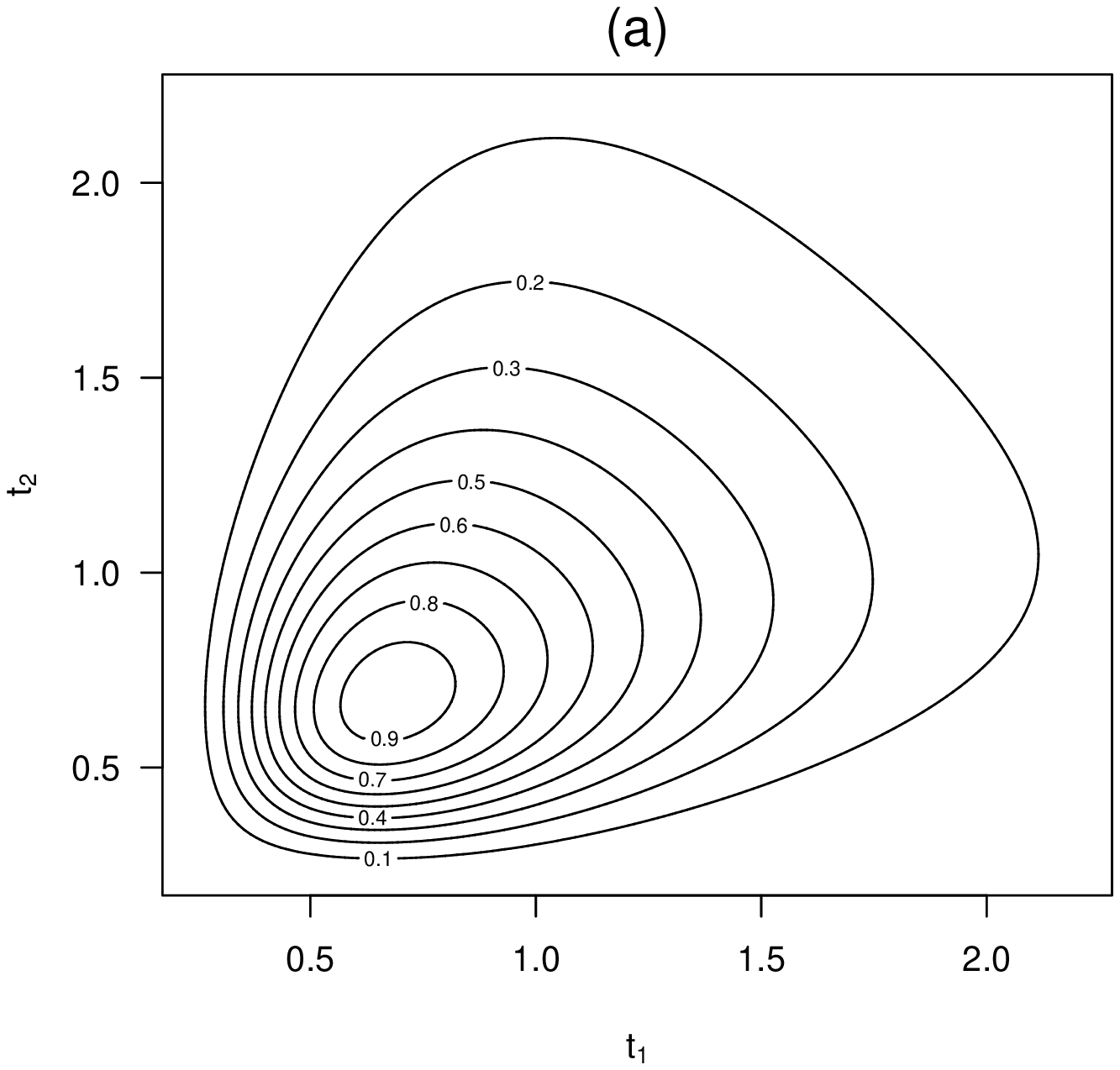}
\includegraphics[scale=0.45]{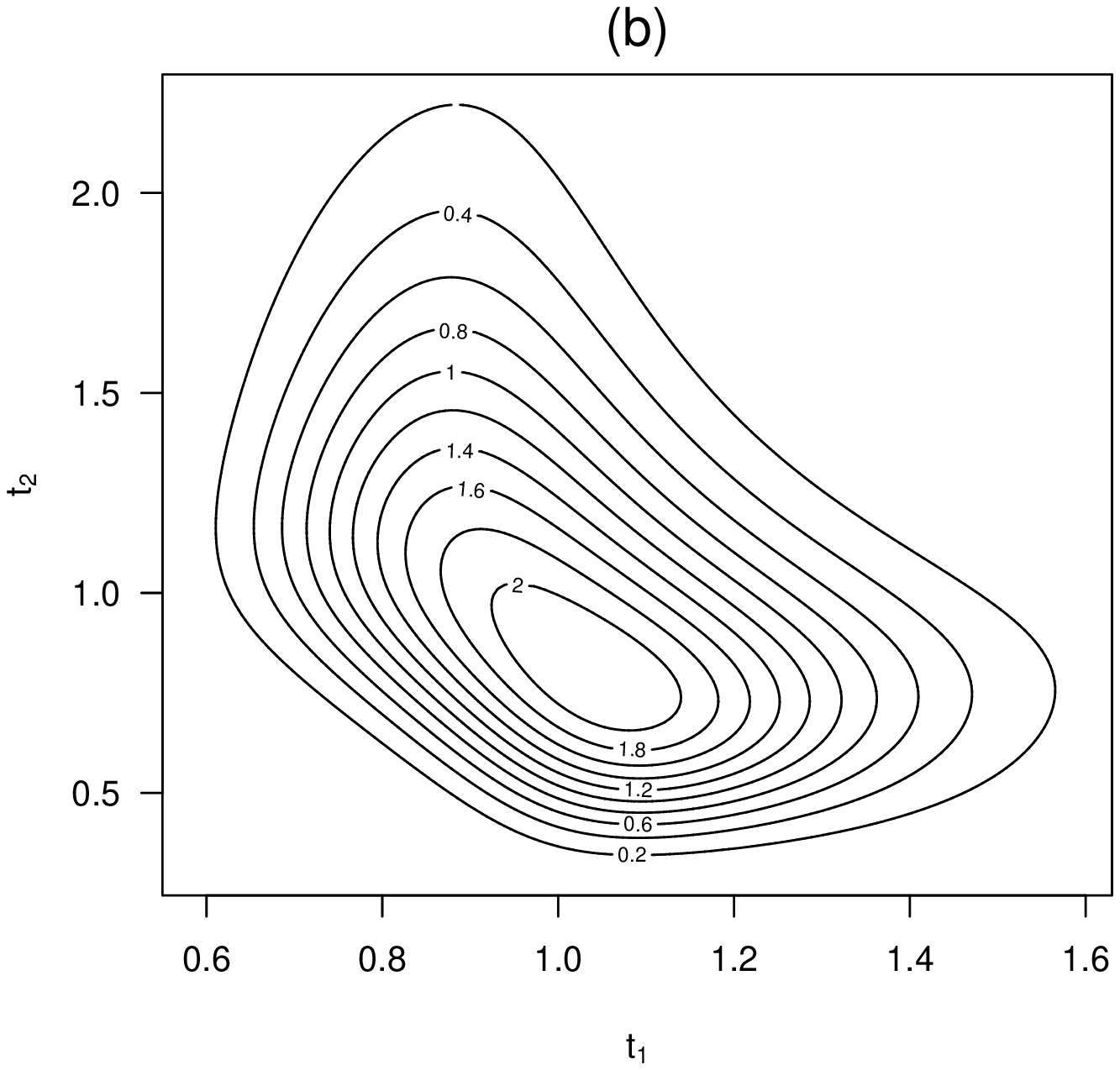}
\includegraphics[scale=0.45]{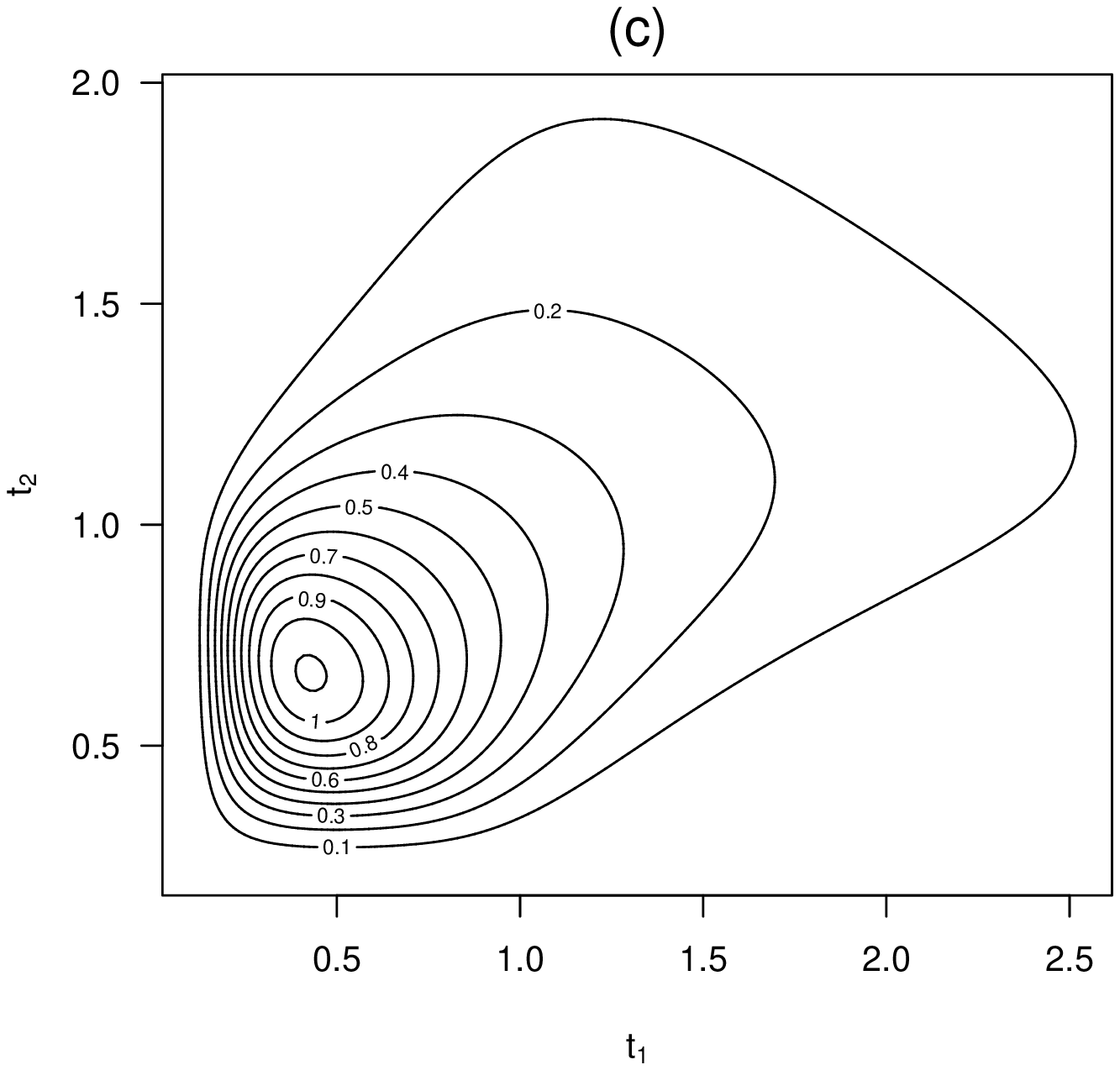}
\includegraphics[scale=0.45]{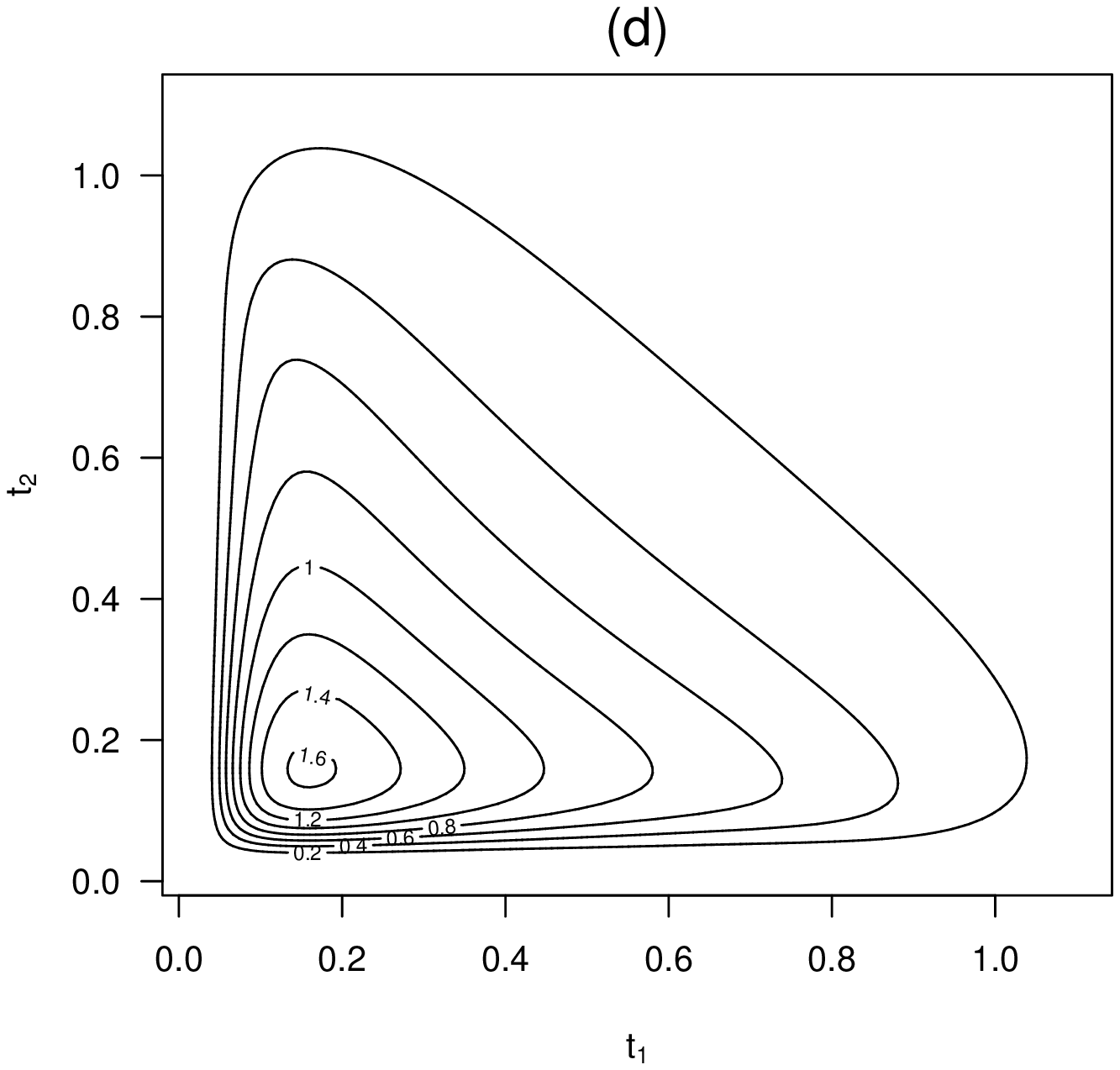}
\includegraphics[scale=0.45]{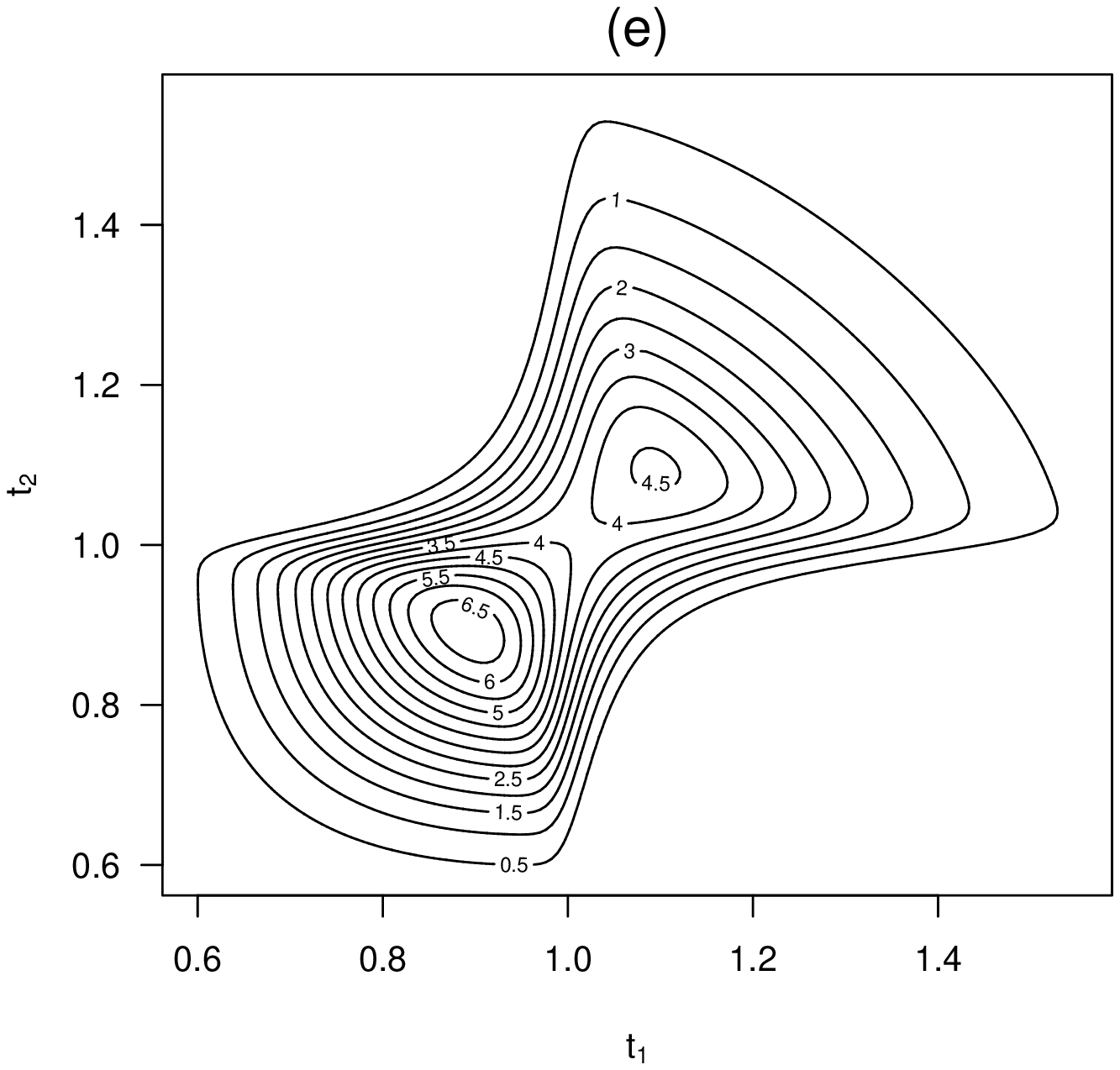}
\includegraphics[scale=0.45]{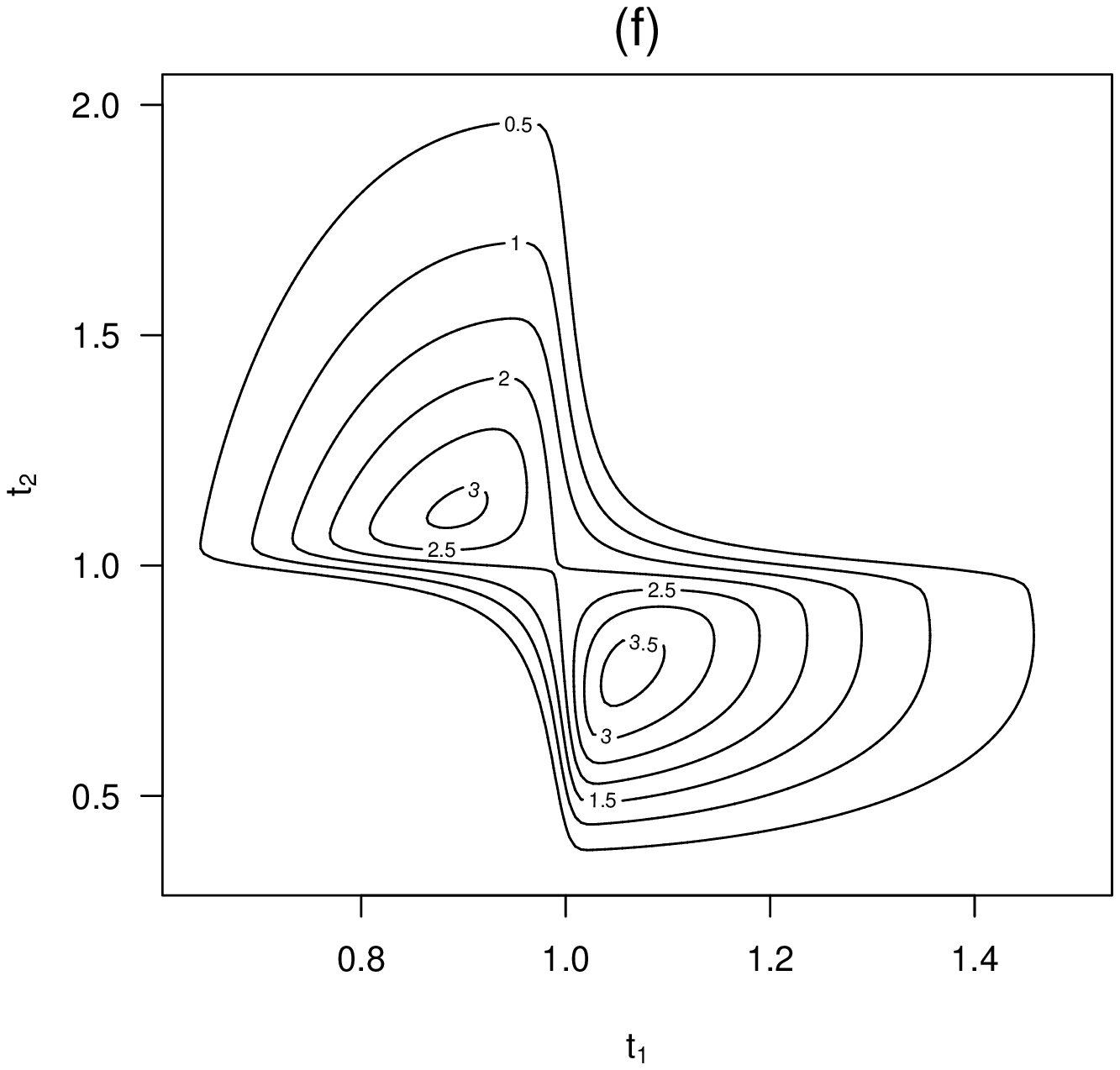}
\caption{Contour plots of the density function~(\ref{fBGBS1}) for some values of
$(\alpha_{1},\alpha_{2},\beta_{1},\beta_{2},\lambda)$: (a) $(0.5,0.5,1.0,1.0,0.5)$;
(b) $(0.2,0.4,1.0,1.0,-1)$; (c) $(0.8,0.5,1.0,1.0,1.5)$;
(d) $(1.5,1.5,1.0,1.0,1.5)$; (e) $(0.2,0.2,1.0,1.0,5)$;
(f) $(0.2,0.4,1.0,1.0,-10)$;}\label{densities_plots}
\end{figure}

The following theorem provides the marginal and conditional distributions of the SBVBS distribution.
\begin{theorem}\label{teo1}
If $(T_1,T_2) \sim\BVSBS(\alpha_{1},\alpha_{2},\beta_{1},\beta_{2},\lambda)$, then:
\begin{enumerate}
\item[{\rm (i)}]  $T_j \sim\BS(\alpha_{j},\beta_{j})$, for $j=1,2$.
\item[{\rm (ii)}] The conditional pdf of $T_1$ given $T_2=t_2$ is
\[
f_{T_1|T_2}(t_1|T_2=t_2)=2\phi(a_{1})\Phi(\lambda a_{1}a_{2})
\frac{t_1^{-3/2}(t_1+\beta_{1})}{2\alpha_{1}\sqrt{\beta_{1}}}.
\]
\item[{\rm (iii)}]  The cdf of $T_1$ given $T_2=t_2$ is
\[
\Pr(T_1\leq t_1|T_2=t_2)=\Phi(a_{1})-2\Upsilon(a_{1},\lambda a_{2}),
\]
where $\Upsilon(\cdot,\cdot)$ denotes the Owen's function \citep{Owen1956}.
\end{enumerate}
\end{theorem}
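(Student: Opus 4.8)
The plan is to treat the three parts in sequence, with a single change of variable $t_j\mapsto a_j$ carrying most of the burden. The key observation, from \eqref{at1}, is that
\[
\frac{\dd a_j}{\dd t_j}=\frac{t_j^{-3/2}(t_j+\beta_j)}{2\alpha_j\sqrt{\beta_j}},
\]
so the Birnbaum--Saunders-type factor multiplying each $\phi(a_j)$ in \eqref{fBGBS1} is exactly the Jacobian of this substitution, and $a_j$ increases monotonically over all of $\Reais$ as $t_j$ runs through $\Reais_+$.

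For part (i), I would obtain $f_{T_1}$ by integrating \eqref{fBGBS1} against $t_2$ over $\Reais_+$. Pulling the $t_1$-factors out and substituting $t_2\mapsto a_2$ turns the remaining integral into $\int_{-\infty}^{\infty}\phi(a_2)\Phi(\lambda a_1 a_2)\,\dd a_2$, which is evaluated by the standard Gaussian identity $\int_{-\infty}^{\infty}\phi(z)\Phi(a+bz)\,\dd z=\Phi\bigl(a/\sqrt{1+b^2}\bigr)$ with $a=0$ and $b=\lambda a_1$; its value $1/2$ cancels the leading factor $2$. What is left is $\phi(a_1)$ times the Birnbaum--Saunders factor, and writing $\phi(a_1)=(2\pi)^{-1/2}\exp(-a_1^2/2)$ with $a_1^2=\alpha_1^{-2}(t_1/\beta_1+\beta_1/t_1-2)$ reproduces \eqref{fBS} exactly, giving $T_1\sim\BS(\alpha_1,\beta_1)$; the case of $T_2$ is identical by symmetry. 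Part (ii) is then immediate: dividing \eqref{fBGBS1} by $f_{T_2}(t_2)=\phi(a_2)\,t_2^{-3/2}(t_2+\beta_2)/(2\alpha_2\sqrt{\beta_2})$ cancels $\phi(a_2)$ together with the entire $t_2$-factor, leaving the stated conditional density.

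The substance of the theorem is in part (iii). Integrating the conditional density of part (ii) over $(0,t_1]$ and applying the same substitution $s\mapsto a_1(s)$ reduces the conditional cdf to $2\int_{-\infty}^{a_1}\phi(w)\Phi(\lambda a_2 w)\,\dd w$, where $a_1$ now denotes the upper endpoint $a_1(t_1)$. The claim thus hinges on the identity
\[
\int_{-\infty}^{h}\phi(w)\Phi(bw)\,\dd w=\tfrac{1}{2}\Phi(h)-\Upsilon(h,b),
\]
with $h=a_1$ and $b=\lambda a_2$, where $\Upsilon(h,b)=(2\pi)^{-1}\int_0^b (1+x^2)^{-1}\exp[-\tfrac12 h^2(1+x^2)]\,\dd x$. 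I would prove this by differentiating both sides in $h$: the left side has derivative $\phi(h)\Phi(bh)$, while differentiating $\Upsilon$ under the integral sign and then substituting $v=hx$ gives $\frac{\dd}{\dd h}\Upsilon(h,b)=\phi(h)\bigl[\tfrac12-\Phi(bh)\bigr]$, so both sides have the common derivative $\phi(h)\Phi(bh)$. Verifying that both sides vanish as $h\to-\infty$ and agree at $1/2$ as $h\to+\infty$ pins down the constant, and multiplying by $2$ yields $\Phi(a_1)-2\Upsilon(a_1,\lambda a_2)$.

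I expect the identification with Owen's function to be the only delicate step; the rest is the change of variable and the elementary Gaussian integral from part (i). The place that most warrants care is the differentiation-under-the-integral argument, namely confirming that the substitution $v=hx$ and the boundary evaluations as $h\to\pm\infty$ are handled correctly so that the integration constant is exactly fixed.
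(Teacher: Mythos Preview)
Your proposal is correct and follows essentially the same route as the paper: for (iii) both you and the authors make the substitution $t\mapsto a_t$ to reduce the conditional cdf to $2\int_{-\infty}^{a_1}\phi(u)\Phi(\lambda a_2 u)\,\dd u$ and then invoke the skew-normal cdf formula in terms of Owen's $\Upsilon$. The only differences are cosmetic: the paper reads (i) directly off the construction (since $Z_j\sim\N(0,1)$ marginally, $T_j\sim\BS(\alpha_j,\beta_j)$ by the very definition of the BS law), whereas you recover the marginal by integrating out $t_2$; and for the Owen identity the paper simply cites \cite{Azzalini1985}, whereas you supply the short differentiation-in-$h$ argument yourself.
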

\begin{proof}
Parts (i) and (ii) follow from de definition of the distribution. We have that
\[
\Pr(T_1\leq t_1|T_2=t_2)=\int_{0}^{t_1}2\phi(a_t)\Phi(\lambda a_ta_{2})
\frac{t^{-3/2}(t+\beta_{1})}{2\alpha_{1}\sqrt{\beta_{1}}}\dd t,
\]
where
\[
a_t=\frac{1}{\alpha_{1}}\biggl[\biggl(\frac{t}{\beta_{1}}\biggr)^{1/2}
-\biggl(\frac{\beta_{1}}{t}\biggr)^{1/2}\biggr].
\]
Making the change of variable $u=a_t$, we arrive at
\[
\Pr(T_1\leq t_1|T_2=t_2)=\int_{-\infty}^{a_1}2\phi(u)\Phi(\lambda ua_{2})\dd u.
\]
Now, from \cite{Azzalini1985} we can show that
$\Pr(T_1\leq t_1|T_2=t_2)=\Phi(a_{1})-2\Upsilon(a_{1},\lambda a_{2})$ and
therefore the result (iii) holds.
\end{proof}

Some properties of the random vector $(T_1,T_2)$ are provided in the following theorem.
\begin{theorem}\label{teo2}
If $(T_1,T_2) \sim \BVSBS(\alpha_{1},\alpha_{2},\beta_{1},\beta_{2},\lambda)$, then:
\begin{enumerate}
\item[{\rm (i)}] $(k_1T_1,T_2) \sim \BVSBS(\alpha_{1},\alpha_{2},k_1\beta_{1},\beta_{2},\lambda)$,\qquad $k_1>0$.
\item[{\rm (ii)}] $(T_1,k_2T_2) \sim \BVSBS(\alpha_{1},\alpha_{2},\beta_{1},k_2\beta_{2},\lambda)$,\qquad $k_2>0$.
\item[{\rm (iii)}] $(k_1T_1,k_2T_2) \sim \BVSBS(\alpha_{1},\alpha_{2},k_1\beta_{1},k_2\beta_{2},\lambda)$,\qquad $k_1,k_2>0$.
\item[{\rm (iv)}] $(T_1^{-1},T_2^{-1}) \sim \BVSBS(\alpha_{1},\alpha_{2},\beta_{1}^{-1},\beta_{2}^{-1},\lambda)$.
\item[{\rm (v)}] $(T_1^{-1},T_2) \sim \BVSBS(\alpha_{1},\alpha_{2},\beta_{1}^{-1},\beta_{2},-\lambda)$.
\item[{\rm (vi)}] $(T_1,T_2^{-1}) \sim \BVSBS(\alpha_{1},\alpha_{2},\beta_{1},\beta_{2}^{-1},-\lambda)$.
\end{enumerate}
\end{theorem}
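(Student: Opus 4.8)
The plan is to exploit the stochastic representation that \emph{defines} the SBVBS law, namely $T_j=\beta_j[\tfrac{\alpha_j}{2}Z_j+((\tfrac{\alpha_j}{2}Z_j)^2+1)^{1/2}]^2$ for $j=1,2$, where $(Z_1,Z_2)$ has the Arnold skew-normal joint density $2\phi(z_1)\phi(z_2)\Phi(\lambda z_1z_2)$. Since the joint law of $(T_1,T_2)$ is completely determined by this underlying pair together with the parameters $(\alpha_1,\alpha_2,\beta_1,\beta_2)$, it suffices to track how each transformation acts on these ingredients. This reduces all six parts to a little bookkeeping on $\beta_j$, on the signs of the $Z_j$, and on $\lambda$.

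For the scale results (i)--(iii), I would simply observe that multiplying $T_j$ by $k_j>0$ replaces $\beta_j$ by $k_j\beta_j$ in the representation while leaving $Z_j$, and hence the whole pair $(Z_1,Z_2)$, untouched: $k_jT_j=(k_j\beta_j)[\tfrac{\alpha_j}{2}Z_j+(\cdots)^{1/2}]^2$. Consequently $(k_1T_1,k_2T_2)$ is again SBVBS with scale parameters updated to $k_1\beta_1,k_2\beta_2$ and all other parameters, including $\lambda$, unchanged, which is exactly (iii); parts (i) and (ii) are the special cases $k_2=1$ and $k_1=1$.

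The reciprocation results (iv)--(vi) rest on the elementary identity $(w+\sqrt{w^2+1})^{-1}=\sqrt{w^2+1}-w$, which follows from $(\sqrt{w^2+1}+w)(\sqrt{w^2+1}-w)=1$. Taking $w=\tfrac{\alpha_j}{2}Z_j$ gives
\[
T_j^{-1}=\beta_j^{-1}\biggl[\frac{\alpha_j}{2}(-Z_j)+\sqrt{\Bigl(\frac{\alpha_j}{2}(-Z_j)\Bigr)^2+1}\biggr]^2,
\]
so inverting $T_j$ amounts to replacing $\beta_j$ by $\beta_j^{-1}$ and flipping the sign of $Z_j$. One then reads off how sign changes act on the Arnold density. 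Because $\phi$ is even, flipping both signs leaves $2\phi(z_1)\phi(z_2)\Phi(\lambda z_1z_2)$ invariant, so $(T_1^{-1},T_2^{-1})$ retains the same $\lambda$, giving (iv); flipping a single sign sends $\Phi(\lambda z_1z_2)$ to $\Phi(-\lambda z_1z_2)$, equivalent to $\lambda\mapsto-\lambda$, which combined with $\beta_j\mapsto\beta_j^{-1}$ yields (v) and, by symmetry, (vi).

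The one point that demands care --- and essentially the only genuine obstacle --- is propagating the sign flips correctly through the $\Phi(\lambda z_1z_2)$ factor, since this is precisely what produces $\lambda\mapsto-\lambda$ for a single reciprocation while preserving $\lambda$ for a double one. As an alternative to the representation argument, each part can be checked by a direct change of variables in \eqref{fBGBS1}: from \eqref{at1} one verifies that $a_j$ is invariant under the simultaneous scaling $t_j\mapsto k_jt_j$, $\beta_j\mapsto k_j\beta_j$, and that under $t_j\mapsto t_j^{-1}$, $\beta_j\mapsto\beta_j^{-1}$ it satisfies $a_j\mapsto-a_j$; the evenness of $\phi$ together with the univariate BS closure properties recorded in Section \ref{introduction} then reproduce exactly the same accounting on $\lambda$.
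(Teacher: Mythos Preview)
Your proof is correct. The paper's own proof is a single sentence --- ``These follow from \eqref{fBGBS1} upon using suitable transformations'' --- so it gestures at the change-of-variables route you sketch in your final paragraph rather than the stochastic-representation argument you lead with. Your primary approach, tracking how scaling and reciprocation act on the underlying $(Z_1,Z_2)$ pair, is cleaner and more illuminating: it makes transparent \emph{why} a single reciprocation flips $\lambda$ while a double one does not, whereas the density calculation simply verifies it. Since you also outline the density-based verification and correctly identify the key facts (invariance of $a_j$ under simultaneous scaling, $a_j\mapsto-a_j$ under simultaneous reciprocation), your write-up subsumes the paper's argument and adds genuine explanatory content.
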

\begin{proof}
These follow from \eqref{fBGBS1} upon using suitable transformations.
\end{proof}

Since the marginal distributions of the bivariate vector $(T_1,T_2)$ are
BS distributions, the mean and variance of $T_1$ and $T_2$ are obtained
directly from these marginals in the forms
\[
\Es(T_j)=\beta_{j}\biggl(1+\frac{1}{2}\alpha_{j}^2\biggr),
\qquad
\Var(T_j)=(\alpha_{j}\beta_{j})^2\biggl(1+\frac{5}{4}\alpha_{j}^2\biggr),
\qquad  j = 1,2.
\]
Additionally,
\[
\Es(T_j^{-1})=\beta_{j}^{-1}\biggl(1+\frac{1}{2}\alpha_{j}^2\biggr),
\qquad
\Var(T_j^{-1})=\alpha_{j}^2\beta_{j}^{-2}\biggl(1+\frac{5}{4}\alpha_{j}^2\biggr),
\qquad  j = 1,2.
\]
The product moments of $(T_1,T_2)$, $\Es(T_1^rT_2^s)$ say,
are very complicated to be determined algebraically and have to be computed numerically.
In the following, we shall derive an expression for $\Es(T_1T_2)$ which can
be of some interest. We can show after some algebra that
\begin{align*}
\Es(T_1T_2)&=\beta_{1}\beta_{2}\Es\Biggl(\left[\frac{\alpha_{1}}{2}{Z_1}+\sqrt{\left(\frac{\alpha_{1}}{2}{Z_1}\right)^2+1}\right]^2
\left[\frac{\alpha_{2}}{2}{Z_2}+\sqrt{\left(\frac{\alpha_{2}}{2}{Z_2}\right)^2+1}\right]^2\Biggr) \\
&=\beta_{1}\beta_{2}\left[1+\frac{1}{2}(\alpha_{1}^2+\alpha_{2}^2)+\frac{1}{4}\alpha_{1}^2\alpha_{2}^2
+ 2^{1/2}\pi^{-1/2}\alpha_{1}\alpha_{2}\lambda\,\mathbb{I}\right],
\end{align*}
where
\begin{align*}
\mathbb{I}&=I_{00}+\sum_{i=2}^{\infty}u_i\,\frac{\alpha_{1}^{2i}(4i+1)!}{2^{2i}}
\sum_{m=0}^{2i}\frac{m!\,(2\lambda)^{2m}I_{im}}{(2m+1)!\,(2i-m)!} \\
&\quad+ \sum_{i=2}^{\infty}\sum_{k=2}^{\infty}{u_i\,v_k\,
\frac{\alpha_{1}^{2i}\alpha_{2}^{2k}(4i+1)!}{2^{2i}}}\sum_{m=0}^{2i}\frac{m!\,(2\lambda)^{2m}I_{ikm}}{(2m+1)!\,(2i-m)!},
\end{align*}
with
\[
u_i = (-1)^{i-1}\frac{1\times 3\times\cdots\times(2i-3)}{i!\,2^{3i}},\qquad
v_k = (-1)^{k-1}\frac{1\times 3\times\cdots\times(2k-3)}{k!\,2^{3k}},
\]
\[
I_{00}=\Es\left[\frac{1+\frac{\alpha_{2}^2Z_2^2}{2^3}}{(1+\lambda^2Z_2^2)^{1/2}}
\left(Z_2^2+\frac{30\alpha_{1}^2}{2^3(1+\lambda^2Z_2^2)^{2}}\sum_{k=1}^3c_kZ_2^{2k}\right)\right],
\]
\[
I_{ikm}=\Es\left[\frac{Z_2^{2(m+k+1)}}{(1+\lambda^2Z_2^2)^{2i+{1/2}}}\right],
\qquad
I_{im}=\Es\left[Z_2^{2(m+1)}\dfrac{1+\frac{\alpha_{2}^2Z_2^2}{2^3}}{(1+\lambda^2Z_2^2)^{2i+{1/2}}}\right],
\]
being $c_1=1$, $c_2=4\lambda^2/(3!)$ and $c_3=32\lambda^4/(5!)$. For $\lambda=0$ (independent case),
we have immediately that
\[
\Es(T_1T_2)= \beta_{1}\beta_{2}\left[1+\frac{1}{2}(\alpha_{1}^2+\alpha_{2}^2)+\frac{1}{4}\alpha_{1}^2\alpha_{2}^2\right].
\]

\section{Multivariate extension}\label{mult_case}

We have considered the bivariate case in Section \ref{bivcase}, but
extensions to higher dimension can be readily accomplished
using suitable notation. For a random variable
$\bm{Z}=(Z_1,\ldots,Z_p)^\top$ of dimension $p$, we define
the subvectors $\bm{Z}_{(1)}$,\ldots,$\bm{Z}_{(p)}$ of dimensions $(p-1)$
such that, for each $j=1,\ldots,p$, $\bm{Z}_{(j)}$ denotes the
vector $\bm{Z}$ with the $j$th coordinate $Z_{j}$ deleted.
Analogously, for a real vector $\bm{z}=(z_1,\ldots,z_p)^\top$,
$\bm{z}_{(j)}$ is obtained from  $\bm{z}$ with the $j$th coordinate $z_{j}$ deleted.

By assuming (for each $j=1,\ldots,p$) that
\begin{equation*}\label{dcex1}
Z_j|\bm{Z}_{(j)}=\bm{z}_{(j)}\sim\SN\Biggl(\lambda\prod_{j'\ne{j}}{z_{j'}}\Biggr),
\end{equation*}
the joint pdf of $\bm{Z}=(Z_1,\ldots,Z_p)^\top$
takes the form \citep{Arnold-et-al-2002}
\[
f_{\bm{Z}}(\bm{z})=2\Biggl[\prod_{j=1}^p\phi(z_{j})\Biggr]\Phi\Biggl(\lambda \prod_{j=1}^pz_{j}\Biggr),
\qquad \bm{z}\in\Reais^p.
\]
Thus, under the transformation
\begin{equation*}\label{BS3}
T_j=\beta_{j}\left[\frac{\alpha_{j}}{2}{Z_j}+\sqrt{\left(\frac{\alpha_{j}}{2}{Z_j}\right)^2+1}\right]^2,
\qquad j = 1, \ldots, p,
\end{equation*}
where $Z_j\sim\N(0,1)$, we obtain the joint pdf of $\bm{T}=(T_1,\ldots,T_p)^\top$ in the form
\begin{equation}\label{fSMVBS}
f_{\bm{T}}(\bm{t})=2\Biggl[\prod_{j=1}^p\phi(a_{j})\Biggr]\Phi\Biggl(\lambda \prod_{j=1}^pa_{j}\Biggr)
\prod_{j=1}^p\frac{t_j^{-3/2}(t_j+\beta_{j})}{2\alpha_{j}\sqrt{\beta_{j}}},  \qquad\bm{t}\in\Reais_+^p,
\end{equation}
where $\alpha_{j}>0$, $\beta_{j}>0$ and $a_{j}$ is given in \eqref{at1},
$j=1,\ldots,p$.  Let $\bm{\alpha}=(\alpha_{1},\ldots,\alpha_{p})^\top$ and
$\bm{\beta}=(\beta_{1},\ldots,\beta_{p})^\top$. If
$\bm{T}=(T_1,\ldots,T_p)^\top$ has skewed multivariate BS distribution,
then we use the notation $\bm{T}\sim\SMVBS(\bm{\alpha},\bm{\beta},\lambda)$.

Several properties discussed in the bivariate case hold for this
multivariate extension. For example, $T_{j}\sim\BS(\alpha_{j},\beta_{j})$
for $j=1,\ldots,p$, i.e.~the marginal distributions are BS distributions;
$\lambda=0$ corresponds to the independent case; for $k_1,\ldots,k_p>0$,
$(k_1T_1,\ldots,k_pT_p)\sim\SMVBS(\bm{\alpha},\bm{\beta}^*,\lambda)$
with $\bm{\beta}^*=(k_1\beta_{1},\ldots,k_p\beta_{p})^\top$;
$(T_1^{-1},\ldots,T_p^{-1})\sim\SMVBS(\bm{\alpha},\bm{\beta}^{**},\lambda)$,
where $\bm{\beta}^{**}=(\beta_{1}^{-1},\ldots,\beta_{p}^{-1})^\top$, and so on.
In the next section, we shall consider estimation for the
unknown parameters of the $\SMVBS$ distribution in
\eqref{fSMVBS} as well as inference. Thus, from these general results
the bivariate case considered in Section \ref{bivcase} can be
easily specialized by considering $p=2$.

\section{Estimation and inference}\label{estimation_infer}

In this section, we address the problem of estimating
the unknown parameters of the SMVBS distribution.
Let $\bm{t}_{1}, \ldots,\bm{t}_{n}$ denote a
random sample of the $\SMVBS(\bm{\alpha},\bm{\beta},\lambda)$
distribution, where $\bm{t}_{i}=(t_{1i},\ldots,t_{pi})^\top$
and, as before, $\bm{\alpha}=(\alpha_{1},\ldots,\alpha_{p})^\top$ and
$\bm{\beta}=(\beta_{1},\ldots,\beta_{p})^\top$. Let
$\bm{\theta}=(\bm{\alpha}^\top,\bm{\beta}^\top,\lambda)^\top$ be the
parameter vector of interest of dimension $2p+1$.

\subsection{Modified moment estimators}

First, we shall present modified moment estimators (MMEs) for the unknown parameters
by following the approach of \cite{Ng-et-al-2003}.
The $\SMVBS$ model has $2p+1$ parameters and the marginal distributions
are BS distributions with parameters $(\alpha_{j},\beta_{j})$, $j=1,\ldots,p$.
Then, the moment estimators for $\alpha_{j}$ and $\beta_{j}$
can be obtained by equating $\Es(T_j)$ and $\Var(T_j)$ to
the corresponding sample estimates for $j=1,\ldots,p$. However,
it is known that in the case of univariate BS distribution, the
moment estimators may not always exist \citep{Ng-et-al-2003}.
Here, we will use $\Es(T_j)$ and $\Es(T_j^{-1})$ instead of
using $\Es(T_j)$ and $\Var(T_j)$, and equate them to the
corresponding sample quantities. After some algebra,
the MMEs for $\alpha_{1},\ldots,\alpha_{p}$ and $\beta_{1},\ldots,\beta_{p}$ are
\begin{equation}\label{MMEs}
\check{\alpha}_j=\Biggl[2\biggl(\biggl(\frac{\bar{s}_j}{\bar{r}_j}\biggr)^{1/2}-1\biggr)\Biggr]^{1/2},
\qquad
\check{\beta}_j=(\bar{s}_j\,\bar{r}_j)^{1/2},
\qquad j=1,\ldots,p,
\end{equation}
where
\[
\bar{s}_j=\frac{1}{n}\sum_{i=1}^nt_{ji},
\qquad
\bar{r}_j=\Biggl[\frac{1}{n}\sum_{i=1}^n\frac{1}{t_{ji}}\Biggr]^{-1}.
\]
The MMEs for $\bm{\alpha}$ and $\bm{\beta}$ in \eqref{MMEs}
are explicit in form and can be used effectively as the
initial guess in the iterative process for the computation of the
maximum likelihood estimators (MLEs) in the next section.

\subsection{Maximum likelihood estimators}

The log-likelihood function for the parameter vector $\bm{\theta}$
(apart from an unimportant constant) is given by
\begin{align}  \label{logV}
\begin{split}
\ell(\bm{\theta})&=-n\sum_{j=1}^p\left[\log(\alpha_{j})+\frac{1}{2}\log(\beta_{j})\right]
+ \sum_{i=1}^n\sum_{j=1}^p\log(t_{ji}+\beta_{j})\\
&\quad -\frac{1}{2}\sum_{i=1}^n\sum_{j=1}^pa_{ji}^2+
\sum_{i=1}^n\log\Biggl[\Phi\Biggl(\lambda \prod_{j=1}^pa_{ji}\Biggr)\Biggr],
\end{split}
\end{align}
where
\[
a_{ji} =  a_{ji}(\alpha_{j},\beta_{j})=\frac{1}{\alpha_{j}}\Biggl[\biggl(\frac{t_{ji}}{\beta_{j}}\biggr)^{1/2}
-\biggl(\frac{\beta_{j}}{t_{ji}}\biggr)^{1/2}\Biggr].
\]
The MLEs of the unknown parameters are
obtained by maximizing the log-likelihood function
in \eqref{logV} with respect to $\bm{\theta}$. By taking the partial derivatives of
the log-likelihood function in \eqref{logV}
with respect to the parameters $\alpha_{j}$, $\beta_{j}$ and $\lambda$, we have
(for $j=1,\ldots,p$)
\[
\frac{\partial\ell(\bm{\theta})}{\partial\alpha_{j}}=
-\frac{n}{\alpha_{j}}+\frac{1}{\alpha_{j}}\sum_{i=1}^na_{ji}^2-\frac{\lambda}{\alpha_{j}}\sum_{i=1}^nw_i\prod_{j=1}^pa_{ji},
\]
\begin{align*}
\frac{\partial\ell(\bm{\theta})}{\partial\beta_{j}}&=-\frac{n}{2\beta_{j}}+\sum_{i=1}^n{\frac{1}{\beta_{j}+t_{ji}}}
-\frac{1}{2\alpha_{j}^2\beta_{j}}\sum_{i=1}^n\left[\frac{\beta_{j}}{t_{ji}}-\frac{t_{ji}}{\beta_{j}}\right]\\
&\quad- \frac{\lambda}{2\alpha_{j}\beta_{j}}\sum_{i=1}^nw_id_{ij} \prod_{j'\neq{j}} a_{j'i},
\end{align*}
\[
\frac{\partial\ell(\bm{\theta})}{\partial\lambda}=
\sum_{i=1}^nw_i\prod_{j=1}^pa_{ji},
\]
where
\[
w_i=w_i(\bm{\alpha},\bm{\beta},\lambda)=\frac{\phi\bigl(\lambda \prod_{j=1}^pa_{ji}\bigr)}
{\Phi\bigl(\lambda \prod_{j=1}^pa_{ji}\bigr)},
\qquad
d_{ij} = d_{ij}(\beta_{j})= \biggl(\frac{t_{ji}}{\beta_{j}}\biggr)^{1/2}+\biggl(\frac{\beta_{j}}{t_{ji}}\biggr)^{1/2}.
\]
The MLE $\widehat{\bm{\theta}} = (\widehat{\bm{\alpha}}^\top, \widehat{\bm{\beta}}^\top,\widehat{\lambda})^\top$
of $\bm{\theta}=(\bm{\alpha}^\top,\bm{\beta}^\top,\lambda)^\top$ can be obtained
by solving the likelihood equations
\begin{equation}\label{eqloglik}
\frac{\partial\ell(\bm{\theta})}{\partial\alpha_{j}}=
\frac{\partial\ell(\bm{\theta})}{\partial\beta_{j}}=
\frac{\partial\ell(\bm{\theta})}{\partial\lambda}=0,\qquad j=1,\ldots,p,
\end{equation}
simultaneously. There are no closed form expressions for the MLE
and its computation has to be performed
numerically using a nonlinear optimization algorithm.
The Newton-Raphson iterative technique could be applied to solve the
likelihood equations and obtain the estimate $\widehat{\bm{\theta}}$.
For computing the MLEs, starting values for the algorithm are required.
Since the MMEs for $\alpha_{j}$ and $\beta_{j}$ in \eqref{MMEs} are explicit,
they can be used effectively as the initial guess in the iterative procedure.
The {\tt Ox}\footnote{{\tt Ox} is freely distributed for academic purposes at
http://www.doornik.com.} matrix programming language \citep{DcK2006}
and the {\tt R} program \citep{R2010} can be used to compute $\widehat{\bm{\theta}}$
numerically.

We can show from the likelihood equations that, for given $\beta_{1},
\ldots,\beta_{p}$, the MLEs of $\alpha_{1},\ldots,\alpha_{p}$ are
\[
\widehat{\alpha}_{j}(\beta_{j}) = \biggl(\frac{\bar{s}_{j}}{\beta_{j}}
+ \frac{\beta_{j}}{\bar{r}_j} - 2\biggr)^{1/2},\qquad j=1,\ldots,p.
\]
By replacing $\alpha_{j}$ by $\widehat{\alpha}_{j}(\beta_{j})$ in \eqref{logV},
we obtain the profile log-likelihood function for $\bm{\beta}$ and
$\lambda$ as
\begin{align*}
\ell_{p}(\bm{\beta},\lambda)&=-n\sum_{j=1}^p\left[\log(\widehat{\alpha}_{j}(\beta_{j}))+\frac{1}{2}\log(\beta_{j})\right]
+ \sum_{i=1}^n\sum_{j=1}^p\log(t_{ji}+\beta_{j})\\
&\quad -\frac{1}{2}\sum_{i=1}^n\sum_{j=1}^p\widehat{a}_{ji}(\beta_{j})^2+
\sum_{i=1}^n\log\Biggl[\Phi\Biggl(\lambda \prod_{j=1}^p\widehat{a}_{ji}(\beta_{j})\Biggr)\Biggr],
\end{align*}
where
\[
\widehat{a}_{ji}(\beta_{j}) =  a_{ji}(\widehat{\alpha}_{j}(\beta_{j}),\beta_{j})
=\frac{1}{\widehat{\alpha}_{j}(\beta_{j})}\Biggl[\biggl(\frac{t_{ji}}{\beta_{j}}\biggr)^{1/2}
-\biggl(\frac{\beta_{j}}{t_{ji}}\biggr)^{1/2}\Biggr].
\]
We can also obtain the MLEs of $\bm{\beta}$ and $\lambda$ by maximizing the profile log-likelihood function
$\ell_{p}(\bm{\beta},\lambda)$ with respect to $\bm{\beta}$ and $\lambda$.
The Newton--Raphson algorithm or some
other optimization algorithm to maximize $\ell_{p}(\bm{\beta},\lambda)$
with respect to $\bm{\beta}$ and $\lambda$ needs to be used,
since the MLEs of $\bm{\beta}$ and $\lambda$ cannot be obtained explicitly.
The profile log-likelihood function $\ell_{p}(\bm{\beta},\lambda)$
is not a real log-likelihood function
and some of the properties that hold for a genuine log-likelihood do not
hold for its profiled version. In particular, there exist
score and information biases, both of order $O(1)$.

The asymptotic inference for the parameter vector
$\bm{\theta}=(\bm{\alpha}^\top,\bm{\beta}^\top,\lambda)^\top$
can be based on the normal approximation of the MLE $\widehat{\bm{\theta}}$
of $\bm{\theta}=(\bm{\alpha}^\top,\bm{\beta}^\top,\lambda)^\top$.
Under some regular conditions stated in \citet[Ch.~9]{CoxHinkley1974} that are fulfilled for
the parameters in the interior of the parameter space,  we have
$\widehat{\bm{\theta}}\stackrel{A}{\sim}\N_{2p+1}(\bm{\theta},\bm{\Sigma}_{\bm{\theta}}^{-1})$,
for $n$ large, where $\stackrel{A}{\sim}$ means approximately
distributed and $\bm{\Sigma}_{\bm{\theta}}^{-1}$ is the asymptotic variance-covariance matrix
of $\widehat{\bm{\theta}}$. The matrix $\bm{\Sigma}_{\bm{\theta}}$ is given in the Appendix.
The multivariate normal $\N_{2p+1}(\bm{0},\bm{\Sigma}_{\bm{\theta}}^{-1})$
distribution can be used to construct approximate confidence intervals for the
parameters $\alpha_{j}$, $\beta_{j}$ and $\lambda$, which are given, respectively, by
$\widehat{\alpha}_{j}\pm z_{\gamma/2}\times[\widehat{\Var}(\widehat{\alpha}_{j})]^{1/2}$,
$\widehat{\beta}_{j}\pm z_{\gamma/2}\times[\widehat{\Var}(\widehat{\beta}_{j})]^{1/2}$ and
$\widehat{\lambda}\pm z_{\gamma/2}\times[\widehat{\Var}(\widehat{\lambda})]^{1/2}$,
where $\widehat{\Var}(\cdot)$ is the diagonal element of $\bm{\Sigma}_{\bm{\theta}}^{-1}$
available at $\widehat{\bm{\theta}}$ corresponding
to each parameter, and $z_{\gamma/2}$ is the quantile $100(1-\gamma/2)\%$ of the standard normal
distribution.

Besides estimation of the model parameters, hypotheses tests can be taken into
account. Let $\bm{\theta}=(\bm{\theta}_1^\top,\bm{\theta}_2^\top)^\top$,
where $\bm{\theta}_1$ and $\bm{\theta}_2$ are disjoint subsets
of $\bm{\theta}$. Consider the test of the null
hypothesis $\mathcal{H}_0:\bm{\theta}_1=\bm{\theta}_{01}$
against $\mathcal{H}_1:\bm{\theta}_1\ne\bm{\theta}_{01}$, where $\bm{\theta}_{10}$ is
a specified vector. Let $\widetilde{\bm{\theta}}$ be the restricted MLE of $\bm{\theta}$
obtained under $\mathcal{H}_0$. The likelihood ratio (LR) statistic to test
$\mathcal{H}_0$ is given by $\omega=2\{\ell(\widehat{\bm{\theta}})-\ell(\widetilde{\bm{\theta}})\}$.
Under $\mathcal{H}_0$ and some regularity conditions,
the LR statistic converges in distribution to a chi-square distribution with dim($\bm{\theta}_1$)
degrees of freedom. In particular, the LR statistic to test the null hypothesis $\mathcal{H}_0:\lambda=0$
against $\mathcal{H}_1:\lambda\neq 0$ takes the form
\[
\omega = 2\{\ell(\widehat{\bm{\alpha}},\widehat{\bm{\beta}},\widehat{\lambda})
-\ell(\widetilde{\bm{\alpha}}, \widetilde{\bm{\beta}}, 0)\},
\]
where $\widetilde{\bm{\alpha}}$ and $\widetilde{\bm{\beta}}$ are the restricted  MLEs
of $\bm{\alpha}$ and $\bm{\beta}$, respectively, obtained from the maximization
of \eqref{logV} under $\mathcal{H}_0:\lambda=0$.
The limiting distribution of this statistic is $\chi_{1}^2$ under
the null hypothesis. The null hypothesis is rejected if the test
statistic exceeds the upper $100(1-\gamma)\%$ quantile of the $\chi_{1}^{2}$
distribution.

\section{Application to real data}\label{application}

In this section, for illustrative purposes, we present an empirical application to demonstrate
the applicability of the proposed skewed multivariate BS distribution.
For the sake of comparison, we also consider the distributions
proposed in \cite{DiazGarciaDominguezMolina2006} and \cite{Kundu-et-al-2010}.
We shall use the data set obtained from \cite{Volle1985},
which represent the amount of time (in hours)
spent on two categories of activities over 100 days in the year 1976
for 28 individuals. The data are: (115, 175), (100, 115), (130, 160),
(115, 180), (119, 143), (100, 150), (960, 132), (150, 115), (142, 870),
(180, 125), (152, 122), (174, 119), (140, 100), (147, 840), (105, 700),
(950, 600), (130, 600), (105, 800), (117, 650), (850, 400), (102, 450),
(100, 960), (920, 640), (128, 860), (102, 122), (107, 730), (860, 580),
(940, 580). The first figure represents the amount of time
spent on eating and the second figure represents the amount of time
spent on watching television.
All the computations were done using the {\tt Ox} matrix programming language \citep{DcK2006}.

We now use the SBVBS distribution to model these bivariate data.
We obtain from the data $\bar{s}_1 = 118.14$, $\bar{s}_2 = 99.43$, $\bar{r}_1 = 113.40$ and $\bar{r}_2 = 84.61$,
and hence the MMEs are $\check{\alpha}_1=0.2035$, $\check{\alpha}_2=0.4099$,
$\check{\beta}_1=115.7457$ and $\check{\beta}_2=91.7220$.
These values are used as initial guesses for $\alpha_1$, $\alpha_2$, $\beta_1$ and $\beta_2$,
respectively. An initial guess for $\lambda$ is also required to start the
maximization of the log-likelihood function \eqref{logV},
i.e.~to solve the likelihood equations \eqref{eqloglik} with $p=2$.
As initial value for $\lambda$ we consider $\check{\lambda}=0$,
which corresponds to the independent case. The algorithm converges
after 21 steps and the MLEs of $\alpha_1$, $\alpha_2$, $\beta_1$, $\beta_2$
and $\lambda$ are $\widehat{\alpha}_1=0.2047$, $\widehat{\alpha}_2=0.4101$, $\widehat{\beta}_1=113.2907$,
$\widehat{\beta}_2=90.7447$ and $\widehat{\lambda}=0.8806$, respectively. Notice that the MMEs for
$\alpha_1$, $\alpha_2$, $\beta_1$ and $\beta_2$ are close to their respective MLEs.
We have also considered other initial guesses for $\lambda$,
for example, with the initial values $\check{\lambda}=-5,-2, 3$ and 4,
the algorithm converges to the same estimates after 39, 26, 25 and 31 steps,
respectively. The 95\% asymptotic confidence intervals for $\alpha_1$, $\alpha_2$, $\beta_1$, $\beta_2$
and $\lambda$ are (0.1508, 0.2586), (0.3051, 0.5152), (104.8325, 121.7489),
(77.8975, 103.5919) and (0.0349, 1.7263), respectively.

Next, we make use of the LR statistic to test the null hypothesis $\mathcal{H}_0:\lambda=0$ against
$\mathcal{H}_1:\lambda\neq 0$. Here,
$\omega = 2\{\ell(\widehat{\alpha}_1,\widehat{\alpha}_2,\widehat{\beta}_1,\widehat{\beta}_2,\widehat{\lambda})-
\ell(\widetilde{\alpha}_1,\widetilde{\alpha}_2,\widetilde{\beta}_1,\widetilde{\beta}_2,0)\}$,
where $\widetilde{\alpha}_1$, $\widetilde{\alpha}_2$, $\widetilde{\beta}_1$ and $\widetilde{\beta}_2$
are, respectively, the restricted MLEs of $\alpha_1$, $\alpha_2$, $\beta_1$ and $\beta_2$
obtained under $\mathcal{H}_0$ and are given by
$\widetilde{\alpha}_1=0.2035$, $\widetilde{\alpha}_2=0.4099$, $\widetilde{\beta}_1=115.7470$ and
$\widetilde{\beta}_2=91.7128$. By a little computation,
we have that the LR test statistic ($\omega$) equals 6.6834\,(p-value $<0.01$). Therefore,
the null hypothesis $\mathcal{H}_0:\lambda=0$ is strongly rejected
at the usual significance levels and hence the assumption of the
skewness (correlation) is suitable for the current bivariate data.
Since the bivariate distribution in \cite{DiazGarciaDominguezMolina2006},
DG--DM say, and our proposed model are nested models
(i.e.~the DG--DM model holds for $\lambda=0$),
the null and alternative hypotheses can be rewritten as
$\mathcal{H}_0$: DG--DM against $\mathcal{H}_1$: SBVBS. Thus,
based on the LR statistic above, the SBVBS distribution
fits the data better than the bivariate DG--DM model.

The generalized LR statistic ($T_{LR,NN}$) presented in \cite{Vuong1989}
can be used for discriminating among non-nested models, which
is a distance between the two models measured in terms of the Kullback--Liebler
information criterion. Then, our proposed model and the bivariate model
in \cite{Kundu-et-al-2010} can be compared by using $T_{LR,NN}$.
For strictly nonnested models, $T_{LR,NN}$ converges in
distribution to a standard normal distribution under the null hypothesis of equivalence of the models
and the null hypothesis is not rejected if $|T_{LR,NN}|\leq z_{\gamma/2}$, where
$z_{\gamma/2}$ is the quantile $100(1-\gamma/2)\%$ of the standard normal
distribution. On the other hand, we reject at
significance level $\gamma$ the null hypothesis of equivalence of the
models in favor of the SBVBS model being better (or worse) than
the model in  \cite{Kundu-et-al-2010} if $T_{LR,NN}>z_{\gamma}$ (or $T_{LR,NN}<-z_{\gamma}$).
The generalized LR test statistic ($T_{LR,NN}$) equals $4.0903$ (p-value $<0.01$).
Therefore, the proposed model is significantly better than the model in
\cite{Kundu-et-al-2010} according to the generalized LR statistic
to model the current data.

A natural question at this point is whether SBVBS
model fits the current data satisfactorily.
Here, in order to verify it, we computed the
modified Cram\'er-von Mises ($W^*$) and Anderson-Darling ($A^*$)
statistics for the fitted marginals, i.e.~$\BS(0.2047,113.2907)$
and $\BS(0.4101,90.7447)$. The statistics $W^*$ and $A^*$ are described in details
by \cite{ChenBala1995}. The values of these statistics are
0.0971\,(p-value $>0.1$) and 0.5680\,(p-value $>0.1$), and 0.0513\,(p-value $>0.1$)
and 0.3145\,(p-value $>0.1$), respectively.
Therefore, based on the marginals, we have that the SBVBS distribution can
be used effectively in this case. Although
it does not guarantee that the bivariate real data will have SBVBS distribution, at
least it gives an indication that the SBVBS model may be used to analyze
this bivariate data.

\section{Skewed bivariate generalized BS distribution}\label{gBSbiv}

The univariate generalized BS (GBS) distribution was proposed in \cite{Diaz-Leiva05},
which is a highly flexible lifetime model that admits different degrees of kurtosis and
asymmetry and possesses unimodality and bimodality.
The GBS distribution is related to standard symmetrical distributions in $\Reais$, also
known as elliptically contoured univariate distributions. The reader is referred to
\cite{Fang-et-al-1990} and \cite{Gupta-Varga-1993} for more details about
symmetrical distributions. For the univariate case, elliptical distributions
correspond to all the symmetric distributions in $\Reais$. Specifically,
a random variable $X$ has an elliptical distribution if its probability
density function is given by $f_X(x)=c\,g([x-\mu]^2/\phi^2)$, $x\in\Reais$,
where $\mu\in\Reais$ is a location parameter and $\phi>0$ is a scale parameter.
The function $g:\Reais\to[0,\infty)$ corresponds to the kernel of the density
of $X$ and $c$ is the normalization constant such that $f_X(x)$ is a density.
The function $g(\cdot)$ is typically known as density generator.
We then write $X\sim\EL(\mu,\phi^2;g)$.

The notation $Z\sim\EL(0,1;g)$ or $Z\sim\EL(g)$ is used
for a random variable $Z$ that follows a standard elliptical distribution in
$\Reais$. The pdf and cdf of $Z$ are denoted by $f(\cdot)$ and $F(\cdot)$, respectively,
where $f(z)=c\,g(z^2)$ and $F(z)=\int_{-\infty}^zf(z)\dd z$.
The density generator of the normal, Cauchy, Student-$t$,
generalized Student-$t$, type I logistic, type II logistic
and power exponential are, respectively, given by $g(u) = (2\pi)^{-1/2}\exp(-u/2)$,
$g(u) = \{\pi(1 + u)\}^{-1}$, $g(u) = \nu^{\nu/2}B(1/2,\nu/2)^{-1}(\nu + u)^{-(\nu+1)/2}$,
where $\nu>0$ and $B(\cdot,\cdot)$ is the beta function,
$g(u) = s^{r/2}B(1/2,r/2)^{-1}(s + u)^{-(r+1)/2}$ ($s,r>0$),
$g(u) = c\,\e^{-u}(1+\e^{-u})^{-2}$, where $c\approx 1.484300029$ is the
normalizing constant obtained from $\int_{0}^{\infty}u^{-1/2}g(u)\dd u=1$,
$g(u) = \e^{-\sqrt{u}}(1+\e^{-\sqrt{u}})^{-2}$ and
$g(u) = c(k)\exp(-\frac{1}{2}u^{1/(1+k)})$, $-1<k\leq 1$,
where $c(k)=\Gamma(1+(k+1)/2)2^{1+(1+k)/2}$.

In the following, we shall introduce the skewed bivariate GBS (SBVGBS) distribution.
A random variable $Y$ follows a standard skew-elliptical distribution in
$\Reais$ if its pdf takes the form
\begin{equation}\label{fdpskeweliptica}
f_Y(y)=2f(y)F(\lambda y),\qquad y\in\Reais.
\end{equation}
We use the notation $Y\sim\SE(\lambda;g)$. If $\lambda=0$ in \eqref{fdpskeweliptica},
then the standard elliptical distribution holds, i.e.~$Y\sim\EL(g)$.
Now, let $Z_j \sim\EL(g)$, for $j=1,2$, with $Z_1|Z_2=z_2\sim\SE(\lambda z_2;g)$ and
$Z_2|Z_1=z_1\sim\SE(\lambda z_{1};g)$. Additionally, consider the transformation
\[
T_j=\beta_{j}\left[\frac{\alpha_{j}}{2}{Z_j}
+\sqrt{\left(\frac{\alpha_{j}}{2}{Z_j}\right)^2+1}\right]^2,\qquad j=1,2,
\]
where $\alpha_{j}>0$ and $\beta_{j}>0$. Then, from the above transformation
and using results due to \cite{Arnold-et-al-2002}, the joint pdf of the SBVGBS
distribution is given by
\begin{equation}\label{fBGBS2}
f_{T_1,T_2}(t_1,t_2)=2f(a_{1})f(a_{2})F(\lambda a_{1}a_{2})
\frac{t_1^{-3/2}(t_1+\beta_{1})}{2\alpha_{1}\sqrt{\beta_{1}}}
\frac{t_2^{-3/2}(t_2+\beta_{2})}{2\alpha_{2}\sqrt{\beta_{2}}},\qquad (t_1,t_2)\in\Reais_+^2,
\end{equation}
where $a_{j}$ is defined in \eqref{at1}.
If $(T_1,T_2)$ follows the SBVGBS distribution,
the notation used is $(T_1, T_2)\sim\SBVGBS(\alpha_{1}, \alpha_{2}, \beta_{1}, \beta_{2}, \lambda;g)$.
Notice that the joint pdf \eqref{fBGBS1} is a special case of \eqref{fBGBS2}.
All extra parameters are considered as known or fixed in \eqref{fBGBS2}. For example,
the degrees of freedom for the Student-$t$ model. The main motivation
for this generalization of the SBVBS model presented in Section \ref{bivcase}
is based on the search for bivariate distributions that are more
flexible than the SBVBS model in analyzing bivariate data.

Some properties for this bivariate class of distributions
are presented in the following theorem.
\begin{theorem}\label{teo3}
If $(T_1,T_2) \sim\SBVGBS(\alpha_{1},\alpha_{2},\beta_{1},\beta_{2},\lambda;g)$, then:
\begin{enumerate}
\item[{\rm (i)}] $T_j \sim\GBS(\alpha_{j},\beta_{j};g)$, for $j=1,2$.
\item[{\rm (ii)}] $(k_1T_1,T_2) \sim\SBVGBS(\alpha_{1},\alpha_{2},k_1\beta_{1},\beta_{2},\lambda;g)$,\qquad $k_1>0$.
\item[{\rm (iii)}] $(T_1,k_2T_2) \sim\SBVGBS(\alpha_{1},\alpha_{2},\beta_{1},k_2\beta_{2},\lambda;g)$,\qquad $k_2>0$.
\item[{\rm (iv)}] $(k_1T_1,k_2T_2) \sim\SBVGBS(\alpha_{1},\alpha_{2},k_1\beta_{1},k_2\beta_{2},\lambda;g)$,\qquad $k_1,k_2>0$.
\item[{\rm (v)}] $(T_1^{-1},T_2^{-1}) \sim\SBVGBS(\alpha_{1},\alpha_{2},\beta_{1}^{-1},\beta_{2}^{-1},\lambda;g)$.
\item[{\rm (vi)}] $(T_1^{-1},T_2) \sim\SBVGBS(\alpha_{1},\alpha_{2},\beta_{1}^{-1},\beta_{2},-\lambda;g)$.
\item[{\rm (vii)}] $(T_1,T_2^{-1}) \sim\SBVGBS(\alpha_{1},\alpha_{2},\beta_{1},\beta_{2}^{-1},-\lambda;g)$.
\end{enumerate}
\end{theorem}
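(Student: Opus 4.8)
The seven assertions are all consequences of direct change-of-variable computations applied to the joint density \eqref{fBGBS2}, in the same spirit as Theorem \ref{teo2} but now carried out for an arbitrary elliptical generator $g$. The plan is to isolate two structural facts and then read off each item. First, from \eqref{at1} one checks the two transformation rules for $a_{j}$: under the rescaling $t_{j}\mapsto t_{j}/k_{j}$ the quantity $a_{j}(\alpha_{j},\beta_{j})$ becomes $a_{j}(\alpha_{j},k_{j}\beta_{j})$ (so $a_{j}$ is left invariant while the scale moves to $k_{j}\beta_{j}$), whereas under reciprocation $t_{j}\mapsto t_{j}^{-1}$ it becomes $-a_{j}(\alpha_{j},\beta_{j}^{-1})$ (the sign flips and the scale moves to $\beta_{j}^{-1}$). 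Second, the elliptical assumption furnishes the symmetry relations $f(-z)=f(z)$ and $F(-z)=1-F(z)$, which will absorb the sign changes produced by reciprocation.

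For part (i) I would integrate $t_{2}$ out of \eqref{fBGBS2}. Since $\dd a_{2}=\frac{t_{2}^{-3/2}(t_{2}+\beta_{2})}{2\alpha_{2}\sqrt{\beta_{2}}}\,\dd t_{2}$, the change of variable $u=a_{2}(\alpha_{2},\beta_{2})$ exactly cancels the second Birnbaum--Saunders factor, leaving
\[
f_{T_{1}}(t_{1})=f(a_{1})\,\frac{t_{1}^{-3/2}(t_{1}+\beta_{1})}{2\alpha_{1}\sqrt{\beta_{1}}}\int_{-\infty}^{\infty}2f(u)F(\lambda a_{1}u)\,\dd u.
\]
The remaining integral equals $1$, because $2f(\cdot)F(\lambda a_{1}\cdot)$ is precisely a standard skew-elliptical density $\SE(\lambda a_{1};g)$ in the sense of \eqref{fdpskeweliptica}; this is verified directly through the reflection $u\mapsto-u$ together with $f(-u)=f(u)$ and $F(-\lambda a_{1}u)=1-F(\lambda a_{1}u)$. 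What remains is the univariate $\GBS(\alpha_{1},\beta_{1};g)$ density, which gives (i) for $j=1$, and the argument for $j=2$ is identical.

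Parts (ii)--(iv) are the scaling statements. Applying the rescaling rule for $a_{j}$, the factor $F(\lambda a_{1}a_{2})$ and the marginal factors $f(a_{j})$ are unchanged, while the Jacobian $1/k_{j}$ combines with the Birnbaum--Saunders factor to reproduce that factor evaluated at $k_{j}\beta_{j}$ (exactly the univariate scale-closure computation). Carrying this out for the first coordinate yields (ii), for the second coordinate (iii), and for both simultaneously (iv).

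Parts (v)--(vii) are the reciprocation statements, where the sign rule for $a_{j}$ comes into play. Under $t_{j}\mapsto t_{j}^{-1}$ the Jacobian $v_{j}^{-2}$ combines with the Birnbaum--Saunders factor to give that factor at $\beta_{j}^{-1}$, and the evenness $f(-a_{j})=f(a_{j})$ leaves each marginal factor unchanged. The only effect is inside $F$: reciprocating both coordinates, as in (v), flips two signs, so $\lambda a_{1}a_{2}$ is unaltered and $\lambda$ is preserved, whereas reciprocating a single coordinate, (vi) or (vii), flips exactly one sign and sends $F(\lambda a_{1}a_{2})$ to $F(-\lambda a_{1}a_{2})$, i.e.\ $\lambda\mapsto-\lambda$. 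The only genuinely nonroutine point in the whole argument is the evaluation of the inner integral in (i) to $1$; this rests entirely on the symmetry of the elliptical generator, and every other step is bookkeeping of Jacobians and signs.
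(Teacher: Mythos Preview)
Your proposal is correct and follows exactly the approach the paper intends: the paper's own proof is the single sentence ``Using suitable transformations in \eqref{fBGBS2}, these results follow,'' and you have simply supplied the details of those transformations (the scaling and reciprocation rules for $a_{j}$, the corresponding Jacobian bookkeeping for the Birnbaum--Saunders factor, and the symmetry $f(-z)=f(z)$, $F(-z)=1-F(z)$ of the elliptical generator). Your treatment of~(i) via the integral $\int_{-\infty}^{\infty}2f(u)F(\lambda a_{1}u)\,\dd u=1$ is also the natural argument and parallels Theorem~\ref{teo1}(i).
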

\begin{proof}
Using suitable transformations in \eqref{fBGBS2}, these results follow.
\end{proof}

From \eqref{fBGBS2}, several news SBVGBS distributions can be obtained.
For example, the joint pdf of the skewed bivariate BS Student-$t$ model takes the form
\[
f_{T_1,T_2}(t_1,t_2)=2\prod_{j=1}^2\frac{\Gamma([\nu_{j}+1]/2)}{(\nu_{j}\pi)^{1/2}\Gamma(\nu_{j}/2)}
\biggl(1+\frac{a_{j}^2}{\nu_{j}}\biggr)^{\frac{\nu_{j} + 1}{2}}
\frac{t_{j}^{-3/2}(t_{j}+\beta_{j})}{2\alpha_{j}\sqrt{\beta_{j}}}
\frac{1}{2}\Biggl[1 + I_{q_j}\biggl(\frac{1}{2},\frac{\nu_{j}}{2}\biggr)\Biggr],
\]
where $\nu_{j}$ is the degrees of freedom,
$q_j=(\lambda a_1a_2)^2/[(\lambda a_1a_2)^2 + \nu_{j}]$ and  $I_x(r,s)$
is the incomplete beta ration function. The skewed bivariate BS
Cauchy distribution is a special case of the joint pdf above
when $\nu_{1}=\nu_{2}=1$. It is evident that
other bivariate models can be obtained as, for example,
the skewed bivariate BS type I (type II) logistic model,
skewed bivariate BS power exponential model, and so on.
Further, extensions to higher dimension can be derived and
MLE of the unknown parameters can also be considered.
These problems can be developed in a future research.

\section{Concluding remarks}\label{conclusions}

The univariate BS model has many attractive properties and has found several
applications in the literature including lifetime, survival and environmental
data analysis \citep[see, for example,][]{Leiva-et-al-2008,Leiva-et-al-2009}.
As mentioned before, little work has been done to extend the BS model
for the multivariate case. In this article, we have introduced the skewed
multivariate BS distribution. The new distribution
is very general, quite flexible and widely applicable.
The new model is an absolutely continuous multivariate distribution whose marginals are univariate
BS distributions. We have discussed several properties of this new class
of distributions and the estimation of parameters is
approached by the method of maximum likelihood. The observed and
expected information matrices are determined and likelihood ratio tests for
some hypotheses of interest are also considered.
The skewed bivariate BS distribution is discussed
and we have shown that the additional shape parameter ($\lambda$) introduces
skewness, correlation and bimodality to this distribution.
These interesting properties make this bivariate model
a quite flexible distribution to model bivariate data. Other bivariate BS models
have been introduced and are given in \cite{DiazGarciaDominguezMolina2006}
and \cite{Kundu-et-al-2010}, KBJ say.
The DG--DM model is an independent bivariate model and hence does not
consider correlation between the random bivariate vector. The KBJ model
considers correlation between the random bivariate vector, but
does not allow bimodality. As remarked, the skewed bivariate
BS model proposed in this article can be skewed, correlated and bimodal, and
therefore is much more flexible than the other bivariate BS models
available in the literature for analyzing bivariate data. This is supported in an application
to real data in which we show that the skewed bivariate
BS model provides consistently better fit than the DG--DM and KBJ models.
Finally, we have also introduced in this paper the skewed bivariate generalized BS distribution
and discussed some of its properties. Although we have discussed
the generalized BS distribution in bivariate settings,
the skewed multivariate generalized
BS distribution can be introduced along the same lines.
This problem can be developed in a future research.

\section*{Acknowledgments}

We gratefully acknowledge grants from FAPESP (Brazil)
and Mobility Program of the Universidad Industrial de Santander
(Colombia).

\appendix

{
\section*{Appendix. Fisher information matrix}\label{FIM_theta}

We present the elements of the Fisher information matrix $\bm{\Sigma}_{\bm{\theta}}$.
First, we shall compute the elements of the Hessian
matrix
\[
\Ddot{\bm{L}}_{\bm{\theta}\bm{\theta}} = \frac{\partial^2\ell(\bm{\theta})}{\partial\bm{\theta}\partial\bm{\theta}^\top}=
\begin{bmatrix}
\Ddot{\bm{L}}_{\bm{\alpha}\bm{\alpha}} & \Ddot{\bm{L}}_{\bm{\alpha}\bm{\beta}} &  \Ddot{\bm{L}}_{\bm{\alpha}\lambda} \\
\Ddot{\bm{L}}_{\bm{\alpha}\bm{\beta}}^\top & \Ddot{\bm{L}}_{\bm{\beta}\bm{\beta}} & \Ddot{\bm{L}}_{\bm{\beta}\lambda} \\
\Ddot{\bm{L}}_{\bm{\alpha}\lambda}^\top & \Ddot{\bm{L}}_{\bm{\beta}\lambda}^\top & \Ddot{L}_{\lambda\lambda}
\end{bmatrix},
\]
with
\[
\Ddot{\bm{L}}_{\bm{\alpha}\bm{\alpha}} = \frac{\partial^2\ell(\bm{\theta})}{\partial\bm{\alpha}\partial\bm{\alpha}^\top}
=((\Ddot{L}_{\alpha_{j'}\alpha_{j}})), \qquad
\Ddot{\bm{L}}_{\bm{\alpha}\bm{\beta}} = \frac{\partial^2\ell(\bm{\theta})}{\partial\bm{\alpha}\partial\bm{\beta}^\top}
=((\Ddot{L}_{\alpha_{j'}\beta_{j}})),
\]
\[
\Ddot{\bm{L}}_{\bm{\alpha}\lambda} = \frac{\partial^2\ell(\bm{\theta})}{\partial\bm{\alpha}\partial\lambda}
=(\Ddot{L}_{\alpha_{1}\lambda}, \ldots, \Ddot{L}_{\alpha_{p}\lambda})^\top, \qquad
\Ddot{\bm{L}}_{\bm{\beta}\bm{\beta}} = \frac{\partial^2\ell(\bm{\theta})}{\partial\bm{\beta}\partial\bm{\beta}^\top}
=((\Ddot{L}_{\beta_{j'}\beta_{j}})),
\]
\[
\Ddot{\bm{L}}_{\bm{\beta}\lambda} = \frac{\partial^2\ell(\bm{\theta})}{\partial\bm{\beta}\partial\lambda}
=(\Ddot{L}_{\beta_{1}\lambda},\ldots,\Ddot{L}_{\beta_{p}\lambda})^\top, \qquad
\Ddot{L}_{\lambda\lambda} = \frac{\partial^2\ell(\bm{\theta})}{\partial\lambda^2},
\]
where  $j,j'=1,\ldots,p$,
\begin{align*}
\Ddot{L}_{\alpha_{j}\alpha_{j}}&=\frac{n}{\alpha_{j}^2}-\frac{3}{\alpha_{j}^2}\sum_{i=1}^na_{ji}^2
+\frac{2\lambda}{\alpha_{j}^2}\sum_{i=1}^{n}w_i\prod_{j=1}^{p}a_{ji}\\
&\quad-\frac{\lambda^3}{\alpha_{j}^2}\sum_{i=1}^{n}w_i\prod_{j=1}^{p}a_{ji}^3
-\frac{\lambda^2}{\alpha_{j}^2}\sum_{i=1}^{n}w_i^2\prod_{j=1}^{p}a_{ji}^2,
\end{align*}
\begin{align*}
\Ddot{L}_{\alpha_{j'}\alpha_{j}}&=\frac{\lambda}{\alpha_{j}\alpha_{j'}}\sum_{i=1}^{n}w_i\prod_{j=1}^{p}a_{ji}
-\frac{\lambda^2}{\alpha_{j}\alpha_{j'}}\sum_{i=1}^{n}w_i^2\prod_{j=1}^{p}a_{ji}^2 \\
&\quad -\frac{\lambda^3}{\alpha_{j}\alpha_{j'}}\sum_{i=1}^{n}w_i\prod_{j=1}^{p}a_{ji}^3,\qquad j'\neq j,
\end{align*}
\begin{align*}
\Ddot{L}_{\alpha_{j}\beta_{j}}&=\frac{1}{\alpha_{j}^3\beta_{j}}\sum_{i=1}^n\biggl[\frac{\beta_{j}}{t_{ji}}-\frac{t_{ji}}{\beta_{j}}\biggr]
+\frac{\lambda}{2\alpha_{j}^2\beta_{j}}\sum_{i=1}^{n}w_id_{ij}\prod_{j'\neq j}a_{j'i}\\
&\quad-\frac{\lambda^2}{2\alpha_{j}^2\beta_{j}}\sum_{i=1}^{n}w_id_{ij}
\Biggl[\lambda\prod_{j=1}^{p}a_{ji}^2+w_i\prod_{j=1}^{p}a_{ji}\Biggr]\prod_{j'\neq j}a_{j'i},
\end{align*}
\begin{align*}
\Ddot{L}_{\alpha_{j'}\beta_{j}}&=-\frac{\lambda}{2\alpha_{j}\alpha_{j'}\beta_{j}}\sum_{i=1}^nw_id_{ij}
\Biggl[-1 + \lambda^2\prod_{j=1}^{p}a_{ji}^2+\lambda w_i\prod_{j=1}^{p}a_{ji}\Biggr]\prod_{j'\neq j}a_{j'i}, \qquad j'\neq j,
\end{align*}
\begin{align*}
\Ddot{L}_{\beta_{j}\beta_{j}}&=\frac{n}{2\beta_{j}^2}-\sum_{i=1}^n(t_{ji}+\beta_{j})^{-2}-\frac{1}{\alpha_{j}^2\beta_{j}^3}\sum_{i=1}^nt_{ji}
+\frac{\lambda}{4\beta_{j}^2}\sum_{i=1}^{n}w_i\prod_{j=1}^{p}a_{ji}\\
&\quad-\frac{\lambda}{2\alpha_{j}\beta_{j}^2}\sum_{i=1}^nw_id_{ij}
\Biggl[-1 + \frac{\lambda}{2\alpha_{j}}d_{ij}\Biggl(w_i + \lambda\prod_{j=1}^{p}a_{ji}\Biggl)\prod_{j'\neq j}a_{j'i}\Biggr]
\prod_{j'\neq j}a_{j'i},
\end{align*}
\begin{align*}
\Ddot{L}_{\beta_{j'}\beta_{j}}&=-\frac{\lambda}{4\alpha_{j}\beta_{j}\alpha_{j'}\beta_{j'}}\sum_{i=1}^nw_id_{ij}d_{ij'}
\Biggl[-1 + \lambda\Biggl(w_i + \lambda\prod_{j=1}^{p}a_{ji}\Biggl)\prod_{j'\neq j}a_{j'i}\Biggr]\prod_{j'\neq j}a_{j'i},\qquad j'\neq j,
\end{align*}
\[
\Ddot{L}_{\alpha_{j}\lambda}=-\frac{1}{\alpha_{j}}\sum_{i=1}^nw_i
\Biggl[1 - \lambda\Biggl(\lambda\prod_{j=1}^{p}a_{ji}^2+w_i\prod_{j=1}^{p}a_{ji}\Biggl)\Biggr]\prod_{j=1}^pa_{ji},
\]
\[
\Ddot{L}_{\beta_{j}\lambda}=-\frac{1}{2\alpha_{j}\beta_{j}}\sum_{i=1}^nw_id_{ij}
\Biggl[1 - \lambda\Biggl(\lambda\prod_{j=1}^{p}a_{ji}^2+w_i\prod_{j=1}^{p}a_{ji}\Biggl)\Biggr]\prod_{j'\neq j}a_{j'i},
\]
\[
\Ddot{L}_{\lambda\lambda}=-\sum_{i=1}^nw_i
\Biggl(\lambda\prod_{j=1}^{p}a_{ji}^2+w_i\prod_{j=1}^{p}a_{ji}\Biggl)\prod_{j=1}^{p}a_{ji}.
\]

The Fisher information matrix is given by
\[
\bm{\Sigma}_{\bm{\theta}} =  -\Es(\Ddot{\bm{L}}_{\bm{\theta}\bm{\theta}})=
\begin{bmatrix}
\bm{\Sigma}_{\bm{\alpha}\bm{\alpha}} & \bm{\Sigma}_{\bm{\alpha}\bm{\beta}} &  \bm{\Sigma}_{\bm{\alpha}\lambda} \\
\bm{\Sigma}_{\bm{\alpha}\bm{\beta}}^\top & \bm{\Sigma}_{\bm{\beta}\bm{\beta}} & \bm{\Sigma}_{\bm{\beta}\lambda} \\
\bm{\Sigma}_{\bm{\alpha}\lambda}^\top & \bm{\Sigma}_{\bm{\beta}\lambda}^\top & \Sigma_{\lambda\lambda}
\end{bmatrix},
\]
where
\[
{\bm{\Sigma}}_{\bm{\alpha}\bm{\alpha}} =((\Sigma_{\alpha_{j'}\alpha_{j}})), \qquad
{\bm{\Sigma}}_{\bm{\alpha}\bm{\beta}} =((\Sigma_{\alpha_{j'}\beta_{j}})),
\qquad
{\bm{\Sigma}}_{\bm{\alpha}\lambda} =(\Sigma_{\alpha_{1}\lambda},\ldots,\Sigma_{\alpha_{p}\lambda})^\top,
\]
\[
{\bm{\Sigma}}_{\bm{\beta}\bm{\beta}} =((\Sigma_{\beta_{j'}\beta_{j}})),
\qquad
{\bm{\Sigma}}_{\bm{\beta}\lambda} =(\Sigma_{\beta_{1}\lambda},\ldots,\Sigma_{\beta_{p}\lambda})^\top,
\]
for  $j,j'=1,\ldots,p$,
\[
\Sigma_{\alpha_{j}\alpha_{j}}=\frac{2n}{\alpha_{j}^2}+\frac{\lambda^3}{\alpha_{j}^2}\sum_{i=1}^{n}\Es\Biggl[w_i\prod_{j=1}^{p}a_{ji}^3\Biggr]
+\frac{\lambda^2}{\alpha_{j}^2}\sum_{i=1}^{n}\Es\Biggl[w_i^2\prod_{j=1}^{p}a_{ji}^2\Biggr],
\]
\[
\Sigma_{\alpha_{j'}\alpha_{j}}=\frac{\lambda^3}{\alpha_{j}\alpha_{j'}}\sum_{i=1}^{n}\Es\Biggl[w_i\prod_{j=1}^{p}a_{ji}^3\Biggr]
+\frac{\lambda^2}{\alpha_{j}\alpha_{j'}}\sum_{i=1}^{n}\Es\Biggl[w_i^2\prod_{j=1}^{p}a_{ji}^2\Biggr],\qquad j'\neq j,
\]
\[
\Sigma_{\alpha_{j}\beta_{j}}=\frac{\lambda^2}{2\alpha_{j}^2\beta_{j}}\sum_{i=1}^{n}\Es\Biggl[w_id_{ij}
\Biggl(\lambda\prod_{j=1}^{p}a_{ji}^2+w_i\prod_{j=1}^{p}a_{ji}\Biggr)\prod_{j'\neq j}a_{j'i}\Biggr],
\]
\[
\Sigma_{\alpha_{j'}\beta_{j}}=\frac{\lambda^2}{2\alpha_{j}\alpha_{j'}\beta_{j}}\sum_{i=1}^{n}\Es\Biggl[w_id_{ij}
\Biggl(\lambda\prod_{j=1}^{p}a_{ji}^2+w_i\prod_{j=1}^{p}a_{ji}\Biggr)\prod_{j'\neq j}a_{j'i}\Biggr], \qquad j'\neq j,
\]
\[
\Sigma_{\beta_{j}\beta_{j}}=\frac{n}{\alpha_{j}^2\beta_{j}^2}+\frac{nK(\alpha_{j})}{\alpha_{j}\beta_{j}^2}
+\frac{\lambda^2}{4\alpha_{j}^2\beta_{j}^2}\sum_{i=1}^n\Es\Biggl[w_id_{ij}^2
\Biggl(w_i + \lambda w_i\prod_{j=1}^{p}a_{ji}\Biggl)\prod_{j'\neq j}a_{j'i}^2\Biggr],
\]
\begin{align*}
\Sigma_{\beta_{j'}\beta_{j}}&=\frac{\lambda}{4\alpha_{j}\beta_{j}\alpha_{j'}\beta_{j'}}\sum_{i=1}^n
\Es\Biggl\{w_id_{ij}d_{ij'}\Biggl[-1 \\
&\qquad\qquad+ \lambda\Biggl(w_i + \lambda\prod_{j=1}^{p}a_{ji}\Biggl)\prod_{j'\neq j}a_{j'i}\Biggr]
\prod_{j'\neq j}a_{j'i}\Biggr\},\qquad j'\neq j,
\end{align*}
\[
\Sigma_{\alpha_{j}\lambda}=-\frac{\lambda}{\alpha_{j}}\sum_{i=1}^n\Es\Biggl[w_i
\Biggl(\lambda\prod_{j=1}^{p}a_{ji}^2+w_i\prod_{j=1}^{p}a_{ji}\Biggl)\prod_{j=1}^pa_{ji}\Biggr],
\]
\[
\Sigma_{\beta_{j}\lambda}=-\frac{\lambda}{2\alpha_{j}\beta_{j}}\sum_{i=1}^n\Es\Biggl[w_id_{ij}
\Biggl(\lambda\prod_{j=1}^{p}a_{ji}^2+w_i\prod_{j=1}^{p}a_{ji}\Biggl)\prod_{j'\neq j}a_{j'i}\Biggr],
\]
\[
\Sigma_{\lambda\lambda}=\sum_{i=1}^n\Es\Biggl[w_i
\Biggl(\lambda\prod_{j=1}^{p}a_{ji}^2+w_i\prod_{j=1}^{p}a_{ji}\Biggl)\prod_{j=1}^{p}a_{ji}\Biggr].
\]
All the expected values above are obtained numerically.
Also, $K(\alpha_{j})=\bigl[\alpha_{j} - \sqrt{\pi}K^*(\alpha_{j})/\sqrt{2}\bigr]/2$,
with $K^*(\alpha_{j})=[1 - \mathtt{erf}(\sqrt{2}/\alpha_{j})]\exp(2/\alpha_{j}^2)$,
for $j=1,\ldots,p$, where ${\tt erf}(\cdot)$ is the error function given by
$\mathtt{erf}(x)=(2/\sqrt{\pi})\int_{0}^{x}\mathrm{e}^{-t^2}\mathrm{d}t$.
Details on $\mathtt{erf}(\cdot)$ can be found in \citet{GradshteynRyzhik2007}.
For small values of $\alpha$ \citep[p.~298]{AE70}
\begin{equation}\label{a0approx}
K^*(\alpha_{j}) \approx \frac{\alpha_{j}}{\sqrt{2\pi}}\biggl(1 - \frac{\alpha_{j}^2}{4} +
\frac{3\alpha_{j}^4}{16}\biggr).
\end{equation}
For numerical evaluation we recommend the use of~(\ref{a0approx}) when $\alpha < 0.5$.

For $\lambda=0$, which corresponds to the
independent case, we obtain the Fisher information matrix
\[
\bm{\Sigma}_{\bm{\theta}\bm{\theta}} = n\,
\mbox{block-diag}\{\bm{\Sigma}_{\bm{\alpha}\bm{\alpha}}, \bm{\Sigma}_{\bm{\beta}\bm{\beta}}, 2/\pi\},
\]
where $\bm{\Sigma}_{\bm{\alpha}\bm{\alpha}}=2\,\diag\{\alpha_{1}^{-2},\ldots,\alpha_{p}^{-2}\}$,
$\bm{\Sigma}_{\bm{\beta}\bm{\beta}}=\diag\{b_{1},\ldots,b_{p}\}$,
with $b_{j}=[\alpha_{j}K(\alpha_{j}) + 1]/(\alpha_{j}^2\beta_{j}^2)$ for $j=1,\ldots,p$.
It can be shown that
\[
|\bm{\Sigma}_{\bm{\theta}\bm{\theta}}|=\frac{2^{p+1}n^{2p+1}}{\pi}
\prod_{j=1}^p\frac{[\alpha_{j}K(\alpha_{j}) + 1]}{\alpha_{j}^4\beta_{j}^2}\neq{0}.
\]
Therefore, the Fisher information matrix is not singular at $\lambda=0$.

Finally, it is well known that under some mild regularity conditions,
the asymptotic behavior remains valid if $\bm{\Sigma}_{\bm{\theta}}$ is
approximated by $-\Ddot{\bm{L}}_{\widehat{\bm{\theta}}\widehat{\bm{\theta}}}$, where
$-\Ddot{\bm{L}}_{\widehat{\bm{\theta}}\widehat{\bm{\theta}}}$ is the $(2p+1)\times(2p+1)$
observed information matrix evaluated at $\widehat{\bm{\theta}}$, obtained from
$\Ddot{\bm{L}}_{\bm{\theta}\bm{\theta}}$.
So, in order to avoid numerical integrations, one can use
$-\Ddot{\bm{L}}_{\widehat{\bm{\theta}}\widehat{\bm{\theta}}}$
instead of $\bm{\Sigma}_{\bm{\theta}}$ to make inference.
}

{\small

}

\end{document}